\newcounter{defc}
\newcommand{\defc}{\refstepcounter{defc}\tag{D \thedefc}}
\newcommand{\NNP}{\mathbb{N}^+}
\newcommand{\NNZ}{\mathbb{N}_0}
\newcommand{\ZZ}{\mathbb{Z}}
\global\long\def\pset#1{\wp\left(#1\right)}
\global\long\def\val#1{\overline{#1}}
\global\long\def\lin#1{#1.\mathit{in}}
\global\long\def\lout#1{#1.\mathit{out}}
\global\long\def\proj#1#2{\sigma_{#1}\left(#2\right)}
\global\long\def\lfun#1{#1.\mathit{fun}}
\global\long\def\domain#1{\mathrm{dom}\left(#1\right)}
\global\long\def\layers#1{#1.\mathit{l}}
\global\long\def\conf#1{#1.\mathit{conf}}
\global\long\def\fconf#1#2{#1.\mathit{conf}\left(#2\right)}
\newcommandx\sel[2][usedefault, addprefix=\global, 1=]{\Pi_{#1}\left(#2\right)}
\newcommandx\syndep[3][usedefault, addprefix=\global, 1=, 2=]{#1\prec_{#3}#2}
\newcommandx\nsyndep[3][usedefault, addprefix=\global, 1=, 2=]{#1\not\prec_{#3}#2}
\newcommandx\syndept[3][usedefault, addprefix=\global, 1=, 2=]{#1\prec_{#3}^{+}#2}
\newcommandx\syndeprt[3][usedefault, addprefix=\global, 1=, 2=]{#1\prec_{#3}^{*}#2}
\newcommandx\nsyndeprt[3][usedefault, addprefix=\global, 1=, 2=]{#1\not\prec_{#3}^{*}#2}
\global\long\def\sem#1#2{\llbracket#2\rrbracket_{#1}}
\global\long\def\lupdate#1#2{\left[#1\mapsto#2\right]}
\global\long\def\update#1#2#3{#1\left[#2\mapsto#3\right]}
\newcommandx\semdep[3][usedefault, addprefix=\global, 1=, 2=]{#1\ll_{#3}#2}
\newcommandx\notsemdep[3][usedefault, addprefix=\global, 1=, 2=]{#1\not\!\ll_{#3}#2}
\newcommand{\menquote}[1]{\ensuremath{\text{\textquotedblleft} #1 \text{\textquotedblright}}}
\DeclareMathOperator{\dom}{dom}
\theoremstyle{definition}
\newtheorem{defn}{Definition}[section]
\newtheorem{exmp}[defn]{Example}
\theoremstyle{plain}
\newtheorem{lem}[defn]{Lemma}
\newtheorem{prop}[defn]{Proposition}
\newtheorem{thm}[defn]{Theorem}
\newtheorem{corl}[defn]{Corollary}
\title{A Model of Layered Architectures}
\author{Diego Marmsoler \qquad Alexander Malkis \qquad Jonas Eckhardt\institute{Technische Universit\"at M\"unchen\\Germany}}
\begin{document}
\maketitle

\begin{abstract}
Architectural styles and patterns play an important role in software 
engineering. 
One of the most known ones is the layered architecture style. 
However, this style is usually only stated informally, which 
may cause problems such as ambiguity, wrong conclusions, and difficulty when
checking the conformance of a system to the style.
We address these problems by providing a formal, denotational semantics of the
layered architecture style. Mainly, we present a sufficiently abstract and rigorous description
of layered architectures. 
Loosely speaking, a layered architecture consists of a 
hierarchy of layers, in which services communicate via ports. A layer is 
modeled as a relation between used and provided services, and layer 
composition is defined by means of relational composition. Furthermore, we 
provide a formal definition for the notions of syntactic and semantic 
dependency between the layers. We show that these dependencies are 
not comparable in general. Moreover, we identify sufficient conditions under which,
in an intuitive sense which we make precise in our treatment,
the semantic dependency implies, is implied by, or even coincides with
the reflexive-transitive closure of the syntactic dependency.
Our results provide a technology-independent 
characterization of the layered architecture style, 
which may be used by software architects to ensure that a system is indeed 
built according to that style. %This work aims to contribute towards a rigorous theory of architectural styles and patterns. 
%With this work we address these problems by providing a formal% treatment of one such style: the layered architecture or, in other words, the virtual machines style. %We follow an approach based on denotational semantics: first,
%, denotational semantics of one such style: the layered architecture or, in other words, the virtual machines style. First,
% we provide a general description of layered architectures. Roughly speaking,
%Finally, we identify a basic version of the style, provide a formal definition thereof in terms of constraints over our model and discuss possible implications. 
%Thus, with this work we aim to contribute towards a rigorous theory of architectural styles and patterns.
%\looseness=-1
\end{abstract}

\section{Introduction}

%With this work, we address recent calls for theory by the Software Engineering Community~\cite{Broy2011,Johnson2012}, specifically, we address a point raised by Shaw and Clements who reflect on the software architecture discipline and identify \textquotedblleft expanding formal relationships between architectural design decisions and quality attributes\textquotedblright ~\cite{Shaw2006} as one of the major future directions to go for the field.
%% AM: a splice comma is an error except in poetry. Alternatives: "; specifically, we address ..." or "specifically, addressing ...".
%% AM: More consise:
%% This work targets recent calls for theory by the Software
%% Engineering Community~\cite{Broy2011,Johnson2012}, specifically, addressing
%% a point raised by Shaw and Clements who reflect on the
%% software architecture discipline and identify \textquotedblleft expanding
%% formal relationships between architectural design decisions and quality
%% attributes\textquotedblright ~\cite{Shaw2006} as one of the major
%% future directions to go for the field.

Lack of discipline is a substantial technical source of failures in a number of software product lines \cite{Broy2011,Johnson2012} (while other sources as, e.g., bad management, also exist). A poor architecture 
can result in a disaster for the whole project \cite{Garlan-SoftwareArchitectureARoadmap}, hence, \textquotedblleft expanding 
formal relationships between architectural design decisions and quality 
attributes\textquotedblright\ \cite{Shaw2006} has been identified as a 
promising future direction to go for the field. 
We address the lack of discipline in architectural design 
\cite{BassClementsKazman-SoftwareArchitectureInPractice} 
by providing a formal model for one of the most important architectural styles, 
namely, the layered architecture style, 
which is also known as the virtual machines style.

While this work  contributes to a rigorous theory of architecture styles,
we believe that it has also implications for the practicing architecture
researcher and the prospective software architect. The software architecture
researcher can rely on a mathematical model when working with
styles, while the prospective architect is provided with a solid foundation
for her/his work. A theory of styles would provide the architect with
a set of properties which allows her/him to decide whether a system is actually 
built according to a specific style, in our case, the layered architecture
style. Moreover, the outcome of the analysis would provide the architect
with a set of properties she/he can rely on from a system built according
to a style, for example, semantic independence of lower-level layers
from upper-level layers for systems built according to the layered architecture
style.

\subsection{Approach}

In previous work~\cite{Marmsoler2014}, we describe an approach to formalize architectural
styles. Based on the insight that each style requires its own semantic domain \cite{Abowd1995}, this approach roughly follows three main steps:
\begin{itemize}
\item Find a mathematical model which reflects the nature of the style.
This is probably the most difficult part, since the model must reflect
the fundamental characteristics of a style. It should be as abstract
as possible to allow the results of later analyses to be applied to
a broad range of systems. If for some style %, 
%% AM: I think that a comma here would be wrong.
an adequate model already exists, this step can be skipped.
\item Provide a set of axioms for the model which constrain its structure.
Through the addition of new axioms it is possible to specialize a
style and investigate variations thereof. For example, in the layered
architecture style, a configuration is usually isomorphic to a directed acyclic graph. However, we could add an axiom which restricts
configuration to a directed sequence of layers to get a description
of the strict version of the style.
\item Finally, we can analyze a style by means of mathematical proofs.
We can state characteristic properties for a style and prove them
from our model.
\end{itemize}
In the following we apply our approach to the layered architecture
style as described in \cite{Buschmann2007,Shaw1996,Taylor2010}. %Thus, in Section~\ref{sec:BGRW} we first provide some background and discuss related work in this area. 

Our \emph{major contributions} are:
\begin{compactitem}
\item an abstract and nonetheless precise notion of a layer (for this moment, one can loosely think of a layer as a provider of services that uses some other services),
\item a notion of a layered architecture configuration, which is a collection of layers connected via ports (detailed later in the paper),
\item a denotational semantics of a layered architecture configuration,
\item a model for updating a layer, i.e., changing its semantics,
\item for a pair of layers of a layered architecture configuration, the notions of
\begin{compactitem}
\item a syntactic dependency and
\item a semantic dependency,
\end{compactitem}
\item examples on which the dependencies differ,
\item the following (for now intuitively stated) link between the dependencies:
\begin{compactitem}
\item in any layered architecture configuration the semantic dependency implies the reflexive-transitive closure of the syntactic dependency,
\item in any so-called usable layered architecture configuration the semantic dependency is equivalent to the reflexive-transitive closure of the syntactic dependency.
\end{compactitem}
\end{compactitem}
%Then we provide a model for layered architectures in Section~\ref{sec:model}: first we define the basic notion of layer and configuration, and provide a characterization of layer composition. Then we define the notion of syntactic and semantic dependency and investigate their relationships. Finally, we conclude our analysis in Section~\ref{sec:conclusion} with a brief discussion of our results and possible directions for future work.

\section{Background and Related Work}\label{sec:BGRW}

Related work can roughly be categorized in three main areas: approaches to formalization of architectural styles,
informal descriptions of the layered architecture style, and existing formal analyzes of architectural styles.

In analyzing architectural styles, our work is actually based on work
regarding \emph{approaches to formalization of architectural styles}.
In our work, we follow an approach based on Abowd et al.~\cite{Abowd1995}.
In that work, the authors apply the general approach of denotational
semantics to software architectures with the fundamental insight that
each architectural style needs its own semantic model. On this basis,
Allen~\cite{Allen1997} provides an architecture description language
based on CSP~\cite{Hoare1985} to allow the specification and analysis
of architectural styles. A different, though related approach is
provided by Moriconi et al.\ in~\cite{Moriconi1995}. There, the authors
use first order logical theories to describe architectural styles
and they suggest to use the concept of faithful interpretation mappings
to relate different styles. In a third approach, Le~Métayer in~\cite{LeMetayer1998} proposes to describe
architectures as graphs and architectural styles as graph grammars
with the aid of analyzing architecture evolution.
%An interesting approach applying category theory to formalize architectural concepts is provided by Fiadeiro in~\cite{Fiadeiro2005}.
Finally, Bernardo et al.~\cite{Bernardo2000} propose the use of process algebras to
formalize architectural types, which are weaker forms of architectural styles. \looseness=-1

To build our model for layered architectures, we heavily rely on the
intuition provided by \emph{informal descriptions of the layered architecture
style}. Some of the first documented descriptions of the style can
be found in the work of Shaw and Garlan~\cite{Shaw1996}, where they
identify a set of %well known
%% AM: write with a hyphen before a noun.
well-known styles observed in industry. Taylor et al.~\cite{Taylor2010} elaborate on that work and distinguish between two kinds of layered architectures: the virtual machines style and
the client-server style. Finally, there exists much literature from
practicing architects documenting architectural styles and patterns.
%In our work, we considered 
We consider~\cite{Buschmann2007} as one well-known
representative %for
%% AM: ``representative for'' sounds strange.
of this kind of works.

%% AM: Avoid repeating ``work'' too often.
%While all these works provide the necessary background for our work, there is other work
While all such references provide the necessary background for our study, there is another line of research 
 on \emph{existing formal analyzes of architectural styles}
which is closely related to our work.
In~\cite{Garlan1991}, Garlan and Notkin provide a formal basis for the
implicit-invocation architectural style. The signal-processing style is
analyzed by Garlan and Norman in~\cite{Garlan1990}. Moreover, we can find a
formal description of the pipes-and-filters style in the work of Allen and
Garlan~\cite{Allen1992} and in Broy's $\mathtt{Focus}$-theory~\cite{Broy2001}.
The Enterprise Java Beans architectural style is formally analyzed
by Sousa and Garlan in~\cite{Sousa2001}. In~\cite{Moriconi1995},
the data-flow style is related to the pipes-and-filters style, the
batch-sequential style, and the shared-memory style. The client-server
style is described by Le~Métayer in~\cite{LeMetayer1998}. Finally, there are some formal analyzes of the layered architecture style.
In~\cite{Zave13}, Zave and Rexford build a formal model of layered architectures and use the Alloy Analyzer~\cite{Jackson2012} to analyze the style. Since their analysis concentrates on network-specific properties, it is a refinement of our model, thus, complementing our work.
The work which is probably closest to our work is the one of Broy which provides
a better understanding of the layered architecture style in~\cite{Broy2005}.

In \cite{Broy2005}, Broy provides a model of services and of layered
architectures based on the $\mathtt{Focus}$ theory~\cite{Broy2001}.
In that model, a layer is a component with an import and an export
interface and a layered architecture is a stack of several layers.
Although that model is an important contribution towards a better
understanding of layered architectures, the model 
%is actually based on a concrete model of computation based on streams. 
represents computations explicitly using streams. 
Our model abstracts further away from such details of computations, concentrating  
%However, we think that a model for an architectural style should abstract from the details of the %computational model 
%computation 
% and concentrate 
on the major characteristics
of the style, thus making the results applicable to several, different
%models.
representations of computations.
% This is why 
In fact, our model is based on an abstract notion of a service,
and streams are just one possible realization thereof as shown in Ex.~\ref{ex:stateful}. Other realizations include stateless services as shown in Ex.~\ref{ex:stateless} and more complex interactions as shown in Ex.~\ref{ex:complex}.

\section{A Model for Layered Architectures}\label{sec:model}

In the following section we provide a model of layered architectures
based on ports and services. With this model we want to provide the
basis for a rigorous analysis of the style. Therefore, the model should
be as abstract as possible and capture the intuitive understanding
of the style to allow formulation of characteristic properties of
the style.

%Ports can provide or require services and are classified into input
%and output, respectively. A layer is then modelled as a function from
%valuations of its input-ports to a set of valuations of its output
%ports. Syntactic equivalence of layers is discussed. A layered architecture
%configuration is then modelled as a family of layers, so called layer
%instances, and an attachment relation over input and output-port instances.
%Then, the notion of syntactic dependency, layer semantics, semantic
%change and semantic dependency is discussed. Moreover, a characterisation
%of layer composition is provided. 

\subsection{Ports and Services}

For our model of layered architectures, we assume the existence of
sets $\mathtt{PORT}$ and $\mathtt{SERVICE}$ which contain all ports and services, respectively. Thereby, our notion of service
is rather abstract; A service can be anything, from a simple method
to a complex web-service consisting of a series of interactions. A
port is a placeholder for a set of related services; one can think of
the method's signature or of the address of the web service. Thus, we assume
the existence of a function \ $\mathit{type}\colon\mathtt{PORT}\rightarrow\pset{\mathtt{SERVICE}}$, \ (where $\pset{X}$ is the power set of a set $X$) which assigns a type to each port.
That is, the type of a port is simply a set of services. 
We require that each port is classified either as an input port or as an output port, but not as both. We let $\mathcal{I}$ be the set of input ports and $\mathcal{O}$ be the set of output ports:
\[
\mathcal{I}\cup\mathcal{O} \ = \ \mathtt{PORT}\quad \text{and} \quad
\mathcal{I}\cap\mathcal{O} \ = \ \emptyset\,.
\]
Ports and services constitute the parameters of our theory. By saying what port and services are, our theory can be applied to different contexts.

In the following, let $\NNP$ be the set of positive integers, $\ZZ$ the set of all integers. 
\begin{exmp}[A model of stateless services]\label{ex:stateless}
\lstset{language=C++, columns=fullflexible}%
\begin{figure}%
\begin{subfigure}[t]{.3\textwidth}\centering%
\begin{tikzpicture}[iport/.style={circle,fill,inner sep=3pt},oport/.style={circle,draw,fill=white,inner sep=3pt},
layer/.style={draw,minimum height=30pt,minimum width=100pt}]

\node[layer] (l1) at (0,0) {%
\begin{lstlisting}
bint mult(bint x, bint y) {
  bint z := 0;
  while (y>0) {
    z := add(x,z);
    y := sub(y,1);
  }
  return z;
}
\end{lstlisting}};
\node [oport, outer sep=2pt] (op) at (l1.east) {};
\node at ($(op)+(7pt,7pt)$) {\Large$o$};
\node [rectangle, rounded corners=9, outer sep=0, draw, minimum width=106pt, minimum height=16pt] (os) at ($(l1.north west)+(1.92,-.51)$) {};
\draw (os.south east) -- (op);

\node [iport, outer sep=2pt] (ip1) at ($(l1.west)+(0,25pt)$) {};
\node[rectangle, rounded corners=6, outer sep=0, draw, minimum width=43pt, minimum height=12pt] (is1) at ($(l1.west)+(2.21,.18)$) {};
\node at ($(ip1)+(-.3,.3)$) {\Large$i_1$};
\draw (is1.north west) -- (ip1);

\node [iport, outer sep=2pt] (ip2) at ($(l1.west)-(0,25pt)$) {};
\node[rectangle, rounded corners=6, outer sep=0, draw, minimum width=43pt, minimum height=12pt] (is2) at ($(l1.west)+(2.21,-.31)$) {};
\node at ($(ip2)+(-.2,-.2)$) {\Large$i_2$};
\draw (is2.south west) -- (ip2);
\end{tikzpicture}%
\caption{Stateless}\label{subfig:simple_stateless}%
\end{subfigure}
\qquad
\begin{subfigure}[t]{.6\textwidth}\centering
\begin{tikzpicture}[iport/.style={circle,fill,inner sep=3pt},oport/.style={circle,draw,fill=white,inner sep=3pt},
layer/.style={draw,minimum height=50pt,minimum width=160pt}]

\node[layer] (l1) at (0,0) {%
\begin{lstlisting}
static map<pair<bint,bint>,bint> cache; // lookup table
bint mult(bint x, bint y) {
  pair<bint,bint> p := make_pair(x,y);
  if(cache.count(p)>0) //  if lookup successful
    return cache[p];
  else {  // if the current input has not been cached yet
    bint z := 0;
    while (y>0) {
      z := add(x,z);
      y := sub(y,1);
    }
    cache[p] := z; // store the result
    return z;
  }
}
\end{lstlisting}};
\node [oport, outer sep=2pt] (op) at (l1.east) {};
\node at ($(op)+(7pt,7pt)$) {\Large$o$};
\node [rectangle, rounded corners=9, outer sep=0, draw, minimum width=107pt, minimum height=16pt] (os) at ($(l1.north west)+(1.92,-1)$) {};
\draw (os.south east) -- (op);

\node [iport, outer sep=2pt] (ip1) at ($(l1.west)$) {};
\node[rectangle, rounded corners=6, outer sep=0, draw, minimum width=43pt, minimum height=12pt] (is1) at ($(l1.west)+(2.56,-.54)$) {};
\node at ($(ip1)+(-.3,.3)$) {\Large$i_1$};
\draw (is1.north west) -- (ip1);

\node [iport, outer sep=2pt] (ip2) at ($(l1.west)-(0,2)$) {};
\node[rectangle, rounded corners=6, outer sep=0, draw, minimum width=43pt, minimum height=12pt] (is2) at ($(l1.west)+(2.56,-1.04)$) {};
\node at ($(ip2)+(-.2,-.2)$) {\Large$i_2$};
\draw (is2.south west) -- (ip2);
\end{tikzpicture}%
\caption{Stateful}\label{subfig:simple_stateful}%
\end{subfigure}
\caption{Stateful and stateless layers. The programming language type \lstinline!bint! represents large integers.}
\end{figure}
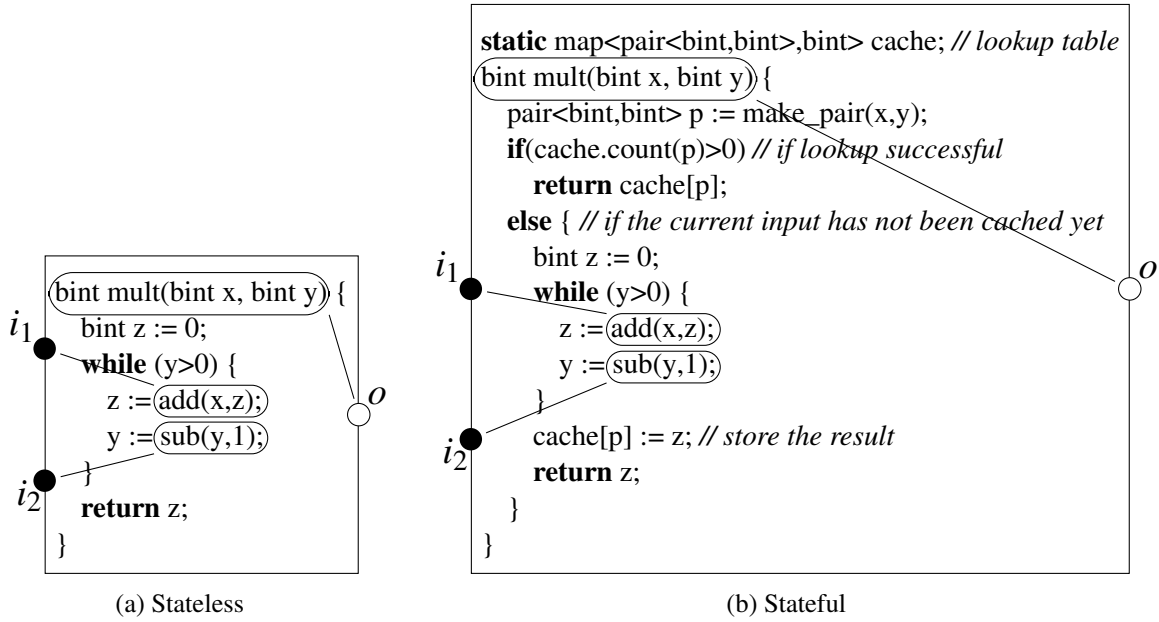
  Consider the code depicted in Fig.~\ref{subfig:simple_stateless}, where we write \lstinline!bint! for \lstinline!bigint!, a programming language type of large but fixed-size integers. In this example, we define the set of input ports as $\mathcal{I}=\{i_1,i_2\}$, the set of output ports as $\mathcal{O}=\{o\}$, where the ports are signatures, i.e., simplifying, strings:\\[.5ex]
$i_1 \ = \ \menquote{\mathtt{bint}\ \mathtt{add}(\mathtt{bint},\mathtt{bint})}\,,\quad
 i_2 \ = \ \menquote{\mathtt{bint}\ \mathtt{sub}(\mathtt{bint},\mathtt{bint})}\,,\quad 
 o \ = \ \menquote{\mathtt{bint}\ \mathtt{mult}(\mathtt{bint},\mathtt{bint})}\,.$\\[.5ex]
Here, a service at a port will be a (set-theoretic) map whose signature is given by the port. For the sake of the example, let us fix some $M\in\NNP$ and use $\mathit{bint}=[-2^M,2^M-1]$ for the set of representatives of integers modulo $2^{M+1}$, writing $\widehat{a}$ for the representative of $a\in\ZZ$. 
%% We define:\\[.5ex]
%% $\begin{array}{r@{\ }c@{\ }l}
%% \mathit{type}(i_1) & = & \{\text{the unique map in}\ (\mathit{bint}{\times}\mathit{bint}\to\mathit{bint})\ \text{representing addition in modular arithmetics}\}\,,\\
%% \mathit{type}(i_2) & = & \{s\in(\mathit{bint}{\times}\mathit{bint}\dashrightarrow\mathit{bint})\mid \forall\,y\in\ZZ\colon \widehat{y}>0\ \Rightarrow\ ((\widehat{y},\widehat{1}),\widehat{y-1})\in s\}\,,\\
%% \mathit{type}(o) & = & \{m\in(\mathit{bint}{\times}\mathit{bint}\to\mathit{bint})\mid \forall\,x,y\in\ZZ\colon \widehat{y}\ge 0\ \Rightarrow\ m(\widehat{x},\widehat{y})=\widehat{xy}\}\,.
%% \end{array}$\\[.5ex]
%% The maps in the types of $i_1$  and $o$ are total, meaning that the computations at the input port $i_1$ are expected to terminate for all inputs and that the provided computation is guaranteed to terminate for all inputs. Our types say that
%% \begin{compactitem}
%% \item at port $i_1$, exactly the modular addition is expected,  which terminates for all inputs,
%% \item at port $i_2$, a service is expected that returns the difference of two number representations under certain conditions and we know nothing if the conditions are not met,
%% \item at port $o$, services are provided that always terminate and each of the services computes multiplication for certain inputs. 
%% \end{compactitem}
The types of the ports are sets containing certain partial or total maps from $\mathit{bint}{\times}\mathit{bint}$ to $\mathit{bint}$:
\begin{compactitem}
\item The type of $i_1$ is the singleton set containing exactly the modular addition, which is defined for all arguments.
\item The type of $i_2$ is the set containing partial and total maps $s$ whose result coincides with that of modular subtraction whenever their first argument is positive and the second is 1.
\item The type of $o$ is the set of total maps $m$ that multiply the arguments $x$, $y$ modulo $2^{M+1}$ whenever $y$ is nonnegative. Such $m$ must return some result also for negative $y$.
\end{compactitem}
In this example, the types abstract away some details about termination and outcome and all details about the way the computations are performed.\qed
\end{exmp}
The above example does not use any global state. In a more complex model, a layer may also have an encapsulated state, as it is the case for object-oriented programming languages. We can easily encode stateful models by changing the notion of service to relate \emph{streams} of concrete values of the input parameters to \emph{streams} of concrete return values, as we will see in Ex.\ \ref{ex:stateful}.

In the following, let $\NNZ$ be the set of nonnegative integers and $\dom f$ the domain of a (partial) map $f$.
\begin{exmp}[A model of stateful services]\label{ex:stateful}
The code from Ex.~\ref{ex:stateless} is slow. For the sake of the example, let us assume that some calls to \lstinline!mult! are often repeated with the same arguments so that caching would help reducing the running time, and let us cache every input-output pair in a simple way as in Fig.~\ref{subfig:simple_stateful}. In the worst case the cache grows until the memory is exhausted, after which we assume that cache insertion and all later events may block or have an arbitrary behavior.

We assume that the cache operations are purely internal and that the cache can hold at least $N$ input-output pairs. We define the set of input ports as $\mathcal{I}=\{i_1,i_2\}$ and the set of output ports as $\mathcal{O}=\{o\}$ again. Let $\mathit{sbint}$ $=$ $(\mathrm{stream}\ \mathit{bint})$ $=$ $\mathit{bint}^*\cup\mathit{bint}^{\omega}$ be the set of streams over $\mathit{bint}$, i.e., the set of finite and countably infinite sequences over $\mathit{bint}$, where we index the elements of a stream by the corresponding downward-closed subset of $\NNZ$. We lift the previous types of $i_1$ and $i_2$ pointwise to streams as usual; e.g., $\mathit{type}(i_1)=\{a\in(\mathit{sbint}{\times}\mathit{sbint}\to\mathit{sbint})\mid \forall\,r,s,t\in\mathit{sbint}\colon a(r,s)=t\Rightarrow (\dom t=(\dom r)\cap(\dom s)\land \forall\,i\in\dom t\colon t(i)=\widehat{r(i){+}s(i)})\}$. We define $\mathit{type}(o)$ to be the set of all maps $m\in(\mathit{sbint}{\times}\mathit{sbint} \to \mathit{sbint})$ such that whenever $m(r,s)=t$ for streams $r,s,t\in\mathit{sbint}$ and $i\in(\dom r)\cap(\dom s)$ is such that the number of cached entries $\lvert\{(r(j),s(j))\in\mathit{bint}^2\mid j{\le}i\land j\in(\dom r)\cap(\dom s)\}\rvert$ is below $N$, then $i\in\dom t$ and we have $(\widehat{s(i)}\ge 0\ \Rightarrow\ t(i)=\widehat{r(i)s(i)})$.%, we have $\dom t\subseteq (\dom r)\cap(\dom s)$, and moreover if $i\in(\dom r)\cap(\dom s)$ is such that the number of cached entries $\lvert\{(r(j),s(j))\in\mathit{bint}^2\mid j{\le}i\land j\in(\dom r)\cap(\dom s)\}\rvert$ is below $N$, then the result $t(i)$ is defined and we have $(\widehat{s(i)}\ge 0\ \Rightarrow\ t(i)=\widehat{r(i)s(i)}).$

Loosely speaking, types containing functions over streams have just helped specifying stateful abstractions of stateful services without actually referring to their state spaces.
\qed
\end{exmp}

\begin{exmp}[A model of complex services]\label{ex:complex}
	In Ex.~\ref{ex:stateful}, a service is still realized by a simple method which depends on a global state. However, we could also think about models in which a service is actually realized by a series of method calls, coordinated by some kind of protocol. By adjusting the concrete notion of service, our theory can also be applied to those kind of models. Here, the behavior of the services relates streams of concrete values for all the input parameters of \emph{all} the methods in the series with streams of output values of \emph{all} the return values of the series. \emph{Ports} are then a set of method signatures equipped with an expected order of execution. Again, an \emph{output port} specifies methods which can be called within the layers implementation while an \emph{input port} specifies those methods realized by a layer.\qed
\end{exmp}

\subsection{Valuations\label{sub:Valuations}}

For a set of ports $P\subseteq\mathtt{PORT}$, a valuation is a function
from the set $P$ to the set of services that respects the types of the ports. 
By $\overline{P}$ we denote the set of all valuations for $P$, formally,
\[
\val P=\prod_{p\in P}\mathit{type}(p)\,.
%\left(P\rightarrow\mathtt{SERVICE}\right),\defc
\]
Sometimes, we shall use $[p_{0},\dots,p_{n}\mapsto S_{0},\dots,S_{n}]$ to
denote a valuation of ports $p_{0},\dots,p_{n}$ with services $S_{0},\dots,S_{n}$, respectively. Formally, 
\[[p_{0},\dots,p_{n}\mapsto S_{0},\dots,S_{n}]\quad = \quad \lambda p\in\{p_{i}\mid i\in\NNZ\land i\le n\}.\begin{cases}
 S_{0} & \textrm{if }\ p=p_{0},\vspace{-7pt}\\
 \vspace{-5pt}\vdots\\
 S_{n} & \textrm{if}\ p=p_{n}.
\end{cases}\]

\begin{exmp}[Valuations for services]
	In the models described in Ex.~\ref{ex:stateless},~\ref{ex:stateful}, and~\ref{ex:complex}, a \emph{port valuation} just associates a services behavior with the corresponding method signature.\qed
\end{exmp}

%Add if space is left
%An aside on connection to program analysis should be made. For a moment, let us equip $\pset{\val{P}}$ with the subset order and $\valND{P}$ with the pointwise order over the subset order (i.e., for $\nu,\mu\in\valND{P}$, let $\nu\leq\mu$ iff $\forall p\in P\colon \nu(p)\subseteq\mu(p)$). The map $\alpha_f:\pset{\overline{P}} \to \valND P$, $X\mapsto\lambda p.\{\mu(p)\mid \mu{\in}X\}$, called the  \emph{dependence-free abstraction for functions}, can be shown to be the lower adjoint of a monotone Galois connection between complete lattices \cite{CousotCousot-FormalLanguageGrammarAndSetConstraintBasedProgramAnalysisByAbstractInterpretation%,Malkis-TechReportRefinementWithExceptions
%%PUT ourselves in after acceptance.
%}. %When $\overline{P}$ and $\valND P$ are viewed as Cartesian products, $\alpha_f$ is called \emph{Cartesian abstraction}. Historically, it arose from replacing the \emph{relational method} of analyzing programs by the \emph{independent attribute method} \cite{JonesMuchnik_ComplexityOfFlowAnalysisInductiveAssertionSynthesisAndALanguageDueToDijkstra}.

\subsection{Layers}
%A layer consists of input and output-ports and a behavior which connects valuations of output-ports and valuations of input-ports. Thus, a layer is modeled as a triple consisting of input and output-ports and a behavior which is a mapping from input-port valuations to output-port valuations. 
Informally speaking, a layer consists of input ports, output ports, and some behavior that generates services at output ports from services at input ports. 
The behavior may be nondeterministic, so we represent it by a map that assigns a set of output-port valuations to every input-port valuation.

%The layers should respect types in a certain sense which we now make precise. 
\begin{defn}\label{def:layer}
%A map $f\colon\val{I}\to\pset{\val O}$ for some $I,O\subseteq\mathtt{PORT}$ is \emph{proper} iff  
%$%\label{ax:ProperMap}
%\forall\mu\in\val{I}\colon
%\left(\mu\ \text{is properly typed}\ \Rightarrow\ \forall\mu'{\in}f(\mu)\colon \mu'\ \text{is properly typed}\right).
%%\conc
%$
A \emph{layer} is a triple $\left(I,O,f\right)$, where $I\subseteq\mathcal{I}$,
$O\subseteq\mathcal{O}$, and $f\colon\val I\to \pset{\val O}$.
\end{defn}
For a layer $l=\left(I,O,f\right)$, we denote by $\lin l$ its input-ports
$I$, by $\lout l$ its output-ports $O$, and by $\lfun l$ its behavior
function $f$. We denote the set of all layers by $\mathcal{L}$.\looseness-1

%To deal with nondeterministic input-port valuations,
%we define for each layer $l$ a function $\lfunND l\colon\valND{\lin l}\rightarrow\pset{\val{\lout l}}$
%as follows:
%\thinmuskip 0mu \thickmuskip 2mu \medmuskip 2mu
%\[
%\lfunND l(\mu)=\left\lbrace \mu' \in \val {\lout l} \mid \exists\nu\in\val{\lin l}\colon\left(\forall p\in\lin l\colon\nu(p)\in\mu(p)\right)\land\mu'\in\lfun l(\nu)\right\rbrace \defc
%\]
\newcommand{\idfinexample}[0]{f(i) = \{o \in \val O \mid o(o_1) = i(i_1) \wedge o(o_2) = i(i_2)\}}
\begin{exmp}[A simple layer]
Consider, for example, the layer depicted in Fig.\ \ref{fig:Layer}, which just copies $i_j$ to $o_j$ for $j \in \{1,2\}$.
In our model, such a layer is represented as a triple $\left(I,O,f\right)$
with input-ports $I=\{i_{1},i_{2}\}$, output-ports $O=\{o_{1},o_{2}\}$
and behavior function $f\in\left(\val I\rightarrow\pset{\val O}\right)$ with $\idfinexample$.\qed
\end{exmp}
\begin{figure}\vspace{-.5ex}%
\centering%
\begin{tikzpicture}[iport/.style={circle,fill,inner sep=3pt},oport/.style={circle,draw,fill=white,inner sep=3pt}, layer/.style={draw,minimum height=20pt,minimum width=150pt}]
\node[layer] (l1) at (0,0) { $f\in\left(\overline{I}\rightarrow\pset{\overline{O}}\right) \text{ with } \idfinexample$};
\node [oport] at ($(l1)+(-45pt,10pt)$) {}; \node at ($(l1)+(-55pt,17pt)$) { $o_1$};
\node [oport] at ($(l1)+(45pt,10pt)$) {}; \node at ($(l1)+(55pt,17pt)$) { $o_2$};
\draw [decorate, decoration={brace, amplitude=5pt}] (-45pt,20pt) -- (45pt,20pt); \node at ($(l1)+(0pt,30pt)$) { $O$};
\node [iport] (l1o1) at ($(l1)-(45pt,10pt)$) {}; \node at ($(l1)-(55pt,17pt)$) { $i_1$};
\node [iport] (l1i1) at ($(l1)+(45pt,-10pt)$) {}; \node at ($(l1)+(55pt,-17pt)$) { $i_2$};
\draw [decorate, decoration={brace, amplitude=5pt}] (45pt,-20pt) -- (-45pt,-20pt); \node at ($(l1)-(0pt,30pt)$) { $I$};
\end{tikzpicture}%
\vspace{-1ex}%
\caption{\label{fig:Layer}A layer with input-ports $I$, output-ports $O$
and behavior function $f$.}%
\vspace{-0.5ex}%
\end{figure}
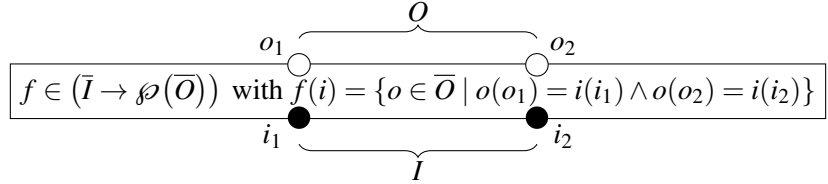

\subsection{Layered Architecture Configuration}

A layered architecture configuration consists of a set of layers and an attachment describing the connections between the layers.
%of layer input ports and output ports, respectively. 
Thus, a layered architecture configuration is modeled as a pair of a set of layers and a so-called attachment relation describing which output ports of which layers convey services to which input ports of which layers.

In the following, we denote by\vspace{-1ex}
\[X\dashrightarrow Y \quad = \quad \{f\subseteq X{\times}Y\mid \forall\, x,y_1,y_2\colon ((x,y_1)\in f\wedge(x,y_2)\in f)\Rightarrow y_1=y_2\}\vspace{-1ex}\]
the set of partial maps from a set $X$ to a set $Y$. 
\begin{defn}\label{def:lac}
A \emph{layered architecture configuration} is a pair $\left(L,A\right)$,
where $L \subseteq \mathcal{L}$ and %$A\subseteq \mathtt{PORT} \times \mathtt{PORT}$
$A\in ((\bigcup_{l\in L} \lin{l})\dashrightarrow(\bigcup_{l\in L} \lout{l}))$, called the \emph{attachment}, are such that the following constraints hold.
\begin{itemize}
\item Different layers do not share any ports, formally:
\[
\forall\ k,l\in L\colon \quad k=l \ \vee \ (\lin k \cup \lout k)\cap(\lin l \cup \lout l)=\emptyset.
\]
\item %If one layer can provide a service to another layer, the latter must be able to consume it,
If a service is provided at an output port that is connected to an input port, the layer owning the input port must be able to employ the service, 
i.e.\ the port types are compatible. Formally:
\[
\forall\ (p_i,p_o)\in A\colon \quad \mathit{type}(p_o)\subseteq\mathit{type}(p_i).
\]
\end{itemize}
\end{defn}\noindent
For a layered architecture configuration $c=(L,A)$, we denote the set of
layers $L$ by $\layers c$ and the attachment relation $A$ by $\conf c$.

The domain of the attachment is a subset of the occurring input-ports, and the range is a subset of the occurring output-ports, signifying that the input ports are connected to the output ports. The attachment is a partial map, since not necessarily all input ports are internally connected, but whenever an input port is connected, it accepts services only from one output port.

\begin{exmp}[A simple layered architecture configuration]
Fig.\ \ref{fig:layered-architecture-configuration} shows a layered
architecture configuration $c=(L,A)$. The first component of the layered architecture configuration describes the layers, i.e., their input and output-ports and their behavior function. In this example $L=\{l_0,\ldots,l_n\}$, where $l_k = \left(I_k,O_k,f_k\right)$ with $I_k = \{i_{0,k}, i_{1,k}, i_{2,k}\}$ for $0 < k\le n$ and $I_0 = \{i_{0,0}\}$, $O_k = \{o_{0,k}, o_{1,k}, o_{2,k}\}$ for $0\le k < n$ and $O_n = \{o_{0,n}\}$, and $f_k \in \left(\val {I_k}\rightarrow\pset{\val {O_k}}\right)$ for $0 \le k \le n$.

The second component of the layered architecture configuration describes the attachment relation $A$ which relates $i_{1,k}$ with $o_{1,k{+}1}$ and $i_{2,k}$ with $o_{2,k{+}1}$:
$A=\{(i_{j,k}, o_{j,k{-}1})\mid j\in\{1,2\} \land k\in\{1,\ldots,n\}\}$.
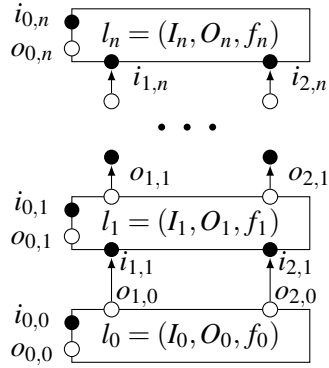
\begin{figure}[htbp]%
\centering%
\begin{tikzpicture}[iport/.style={circle,fill,inner sep=2pt},oport/.style={circle,draw,fill=white,inner sep=2pt},
layer/.style={draw,minimum height=20pt,minimum width=90pt}]
%\draw[dashed]  (-3.5,5.5) rectangle (2,0.5);
iffalse Layer Ln--------------------------------------------------------------- fi

\node[layer] (ln) at (-0.5,5) {$l_n=(I_n,O_n,f_n)$};

\node [iport] at ($(ln)+(-45pt,5pt)$) {};
\node at ($(ln)+(-60pt,7pt)$) {$i_{0,n}$};

\node [oport] at ($(ln)-(45pt,5pt)$) {};
\node at ($(ln)+(-60pt,-7pt)$) {$o_{0,n}$};

\node [iport] (lni1) at ($(ln)-(30pt,10pt)$) {};
\node at ($(ln)-(15pt,17pt)$) {$i_{1,n}$};
\node [oport] (lni1out) at ($(ln)-(30pt,25pt)$) {};
\draw[-latex] (lni1out) edge (lni1) ;

\node [iport] (lni2) at ($(ln)+(30pt,-10pt)$) {};
\node at ($(ln)+(45pt,-17pt)$) {$i_{2,n}$};
\node [oport] (lni2out) at ($(ln)+(30pt,-25pt)$) {};
\draw[-latex]  (lni2out) edge (lni2);

iffalse dots--------------------------------------------------------------- fi
\node at (-0.5,3.75) {\Huge $\cdots$};

iffalse Layer L1--------------------------------------------------------------- fi

\node[layer] (l1) at (-0.5,2.5) {$l_1=(I_1,O_1,f_1)$};

\node [iport] at ($(l1)+(-45pt,5pt)$) {};
\node at ($(l1)+(-60pt,7pt)$) {$i_{0,1}$};

\node [oport] at ($(l1)-(45pt,5pt)$) {};
\node at ($(l1)+(-60pt,-7pt)$) {$o_{0,1}$};

\node [iport] (l1o1) at ($(l1)+(-30pt,25pt)$) {};
\node [oport] (l1o1in) at ($(l1)+(-30pt,10pt)$) {};
\node at ($(l1)+(-15pt,17pt)$) {$o_{1,1}$};
\draw[-latex]  (l1o1in) edge (l1o1);

\node [iport] (l1o2) at ($(l1)+(30pt,25pt)$) {};
\node [oport] (l1o2in) at ($(l1)+(30pt,10pt)$) {};
\node at ($(l1)+(45pt,17pt)$) {$o_{2,1}$};
\draw[-latex] (l1o2in) edge (l1o2) ;

\node [iport] (l1i1) at ($(l1)-(30pt,10pt)$) {};
\node at ($(l1i1)+(2ex,-1.15ex)$) {$i_{1,1}$};

\node [iport] (l1i2) at ($(l1)+(30pt,-10pt)$) {};
%\node at ($(l1)+(45pt,-15pt)$) {$i_{2,1}$};
\node at ($(l1i2)+(2ex,-1.15ex)$) {$i_{2,1}$};

iffalse Layer L0--------------------------------------------------------------- fi

\node[layer] (l0) at (-0.5,1) {$l_0=(I_0,O_0,f_0)$};

\node [iport] at ($(l0)+(-45pt,5pt)$) {};
\node at ($(l0)+(-60pt,7pt)$) {$i_{0,0}$};

\node [oport] at ($(l0)-(45pt,5pt)$) {};
\node at ($(l0)+(-60pt,-7pt)$) {$o_{0,0}$};

\node [oport] (l0o1) at ($(l0)+(-30pt,10pt)$) {};
\node at ($(l0o1)+(2ex,1ex)$) {$o_{1,0}$};

\node [oport] (l0o2) at ($(l0)+(30pt,10pt)$) {};
\node at ($(l0o2)+(2ex,1ex)$) {$o_{2,0}$};

\draw[-latex] (l0o1) edge (l1i1) ;
\draw[-latex] (l0o2) edge (l1i2);
\end{tikzpicture}%
\caption{\label{fig:layered-architecture-configuration}A layered architecture configuration with $n+1$ layer instances.}%
\end{figure}\qed
\end{exmp}

\subsubsection{Selection and Projection}
To facilitate reasoning about layered architecture configurations, in the following we introduce two kind of operators: selection and projection operators.

A selection operator allows to access ports belonging to a layered architecture configuration.
\begin{defn}
For a layered architecture configuration $c$, we define port selection as follows:
\[
	\sel[i] c = \bigcup_{l\in\layers c}\lin l~\mathrm{and}~\sel[o] c = \bigcup_{l\in\layers c}\lout l.
\]
\end{defn}\noindent
To select all ports of a layered architecture configuration, we just write
\[
\sel c = \sel[i] c \cup \sel[o] c.\defc
\]
To select only the \emph{open input-ports} (input ports which are not attached) of a layered architecture configuration $c$, we write
\[
	\sel[\mathit{in}] c \ \colonequals \ \sel[i] c \setminus \domain{\conf c}.\defc
\]
A projection operator, on the other hand, allows to access layers of a layered architecture configuration based on their ports.
\begin{defn}\label{def:proj}
Given a layered architecture configuration $c$ and a port $p \in \lin l \cup \lout l$ for some $l \in \layers c$, we define the layer projection
\[
	\proj p c = l.
\]
\end{defn}\noindent
By Def.~\ref{def:lac}, the layer possessing a given port is unique, so $\proj{.}{.}$ is well-defined.

\subsection{Semantics}
%%In a layered architecture configuration, the semantics of a layer may be influenced by the semantics of its dependent layers. 
%Informally speaking, the actual functioning of a layer in a layered architecture configuration usually depends on the actual valuation of the open input-ports and on the actual functioning of the layers attached to the layer.
%%Thus, in the following, we define another notion of a layers, its \emph{configuration dependent semantics}.
Now we are going to define the computational meaning of a layered architecture configuration.

In the following, for a map $f\colon X{\to}Y$, we write $f|_{Z}$ for the restriction of $f$ to the domain $X{\cap}Z$.
\begin{defn}
\label{def:composition}For a layered architecture configuration $c$, the \emph{attachment-closure} $\lout l^*$ of the output ports of a layer $l$ is
\begin{align}
	\lout l^* = \bigcap\bigl\{ P \subseteq \sel c \mid & \ \lout{l}\subseteq P\label{eq:semPortLayerOut}\\
%	& \wedge (\forall\,i\in\sel[i]c\setminus\sel[in]c\colon i\in P\Rightarrow \fconf c i \in P)\label{eq:semPortIn}\\
	& \wedge (\forall\,(i,o)\in\conf c\colon i\in P\Rightarrow o\in P)\label{eq:semPortIn}\\
	& \wedge (\forall\,o\in\sel[o]c\colon o \in P \Rightarrow \lin{\proj o c}\subseteq P)\bigr\}.\label{eq:semPortOut}
\end{align}
The \emph{configuration semantics} of a layer $l\in\layers c$ is a function $\sem cl\colon\val{\sel[\mathit{in}] c}\rightarrow\pset{\val{\lout l}}$, 
with
%\begin{align}
%\llbracket l\rrbracket_{c}\left(\mu\right) \ = \ \bigl\{ \nu|_{\lout l} \mid & \exists\,P\colon\quad \lout{l}\subseteq P \subseteq\sel{c} \ \wedge\ \nu\in\val{P}\label{eq:semLinePorts}\\
% & \ \wedge\ \mu|_{\sel[\mathit{in}]{c}\cap P} = \nu|_{\sel[\mathit{in}]{c}\cap P}\label{eq:semLineInput}\\
% & \ \wedge \ (\forall\,(i,o)\in\conf{c}\colon i\in P\Rightarrow (o\in P\wedge \nu(i)=\nu(o)))\label{eq:semLineConf}\\
% & \ \wedge \ (\forall\,r\in\layers{c}\colon \lout{r}\cap P\neq\emptyset\label{eq:semLineChooseLayer}\\
% & \ \ \quad\Rightarrow(\lin{r}{\subseteq}P\,\wedge\,\exists\,\xi\in\lfun{l}(\nu|_{\lin{r}})\colon\xi|_P=\nu|_{\lout{r}}))\bigr\}\label{eq:semLineLayerBeh}
%\end{align}
\begin{align}
\sem c l\left(\mu\right) \ = \ \bigl\{ \nu|_{\lout l} \mid & \ \nu\in\val{\lout l^*}\label{eq:semLineVal}\\
 & \ \wedge\ \mu|_{\lout l^*} = \nu|_{\sel[\mathit{in}]{c}}\label{eq:semLineOpenIn}\\
 & \ \wedge \ (\forall i\, \in \sel[i]c \cap \lout l^* \colon (\nu(i)=\nu(\fconf c i)))\label{eq:semLineIn}\\
 & \ \wedge \ (\forall o\in \sel[o]c \cap \lout l^* \colon\nonumber\\
 & \qquad \exists\,\xi\in\lfun{\proj o c}(\nu|_{\lin{\proj o c}})\colon \xi|_{\lout l^*}=\nu|_{\lout{\proj o c}})\bigr\}\label{eq:semLineOut}
\end{align}

\end{defn}
In~\eqref{eq:semLineVal}, we would not like to use all the $\sel{c}$ instead of $\lout l^*$, since, informally speaking, there might be no consistent valuation of all the ports, but there may be a consistent valuation of a subset of ports that is sufficient to define the output of the layer. Instead we use the minimal set of ports including $\lout l$ and closed under the attachment relation.

%% \begin{note}\label{note:inductive}
%% \setcounter{equation}{0}
%% The set $\lout l^*$ can also be constructed recursively. Let 
%% %\begin{align}
%% \[\begin{array}{rl}
%% f\colon \pset{\sel c} \to \pset{\sel c}\,,\quad P \mapsto\ & \lout{l}
%% \\
%% & \cup \left\lbrace o \mid \exists\,i \in P \colon \left( i,o\right) \in\conf c \right\rbrace
%% \\
%% & \cup \left\lbrace i \in \lin r \mid r \in \layers c \,\land\, P \cap \lout r \neq \emptyset \right\rbrace.
%% \end{array}\]
%% Using the fixpoint theorem of Tarski one can show that $\lout l^*=\bigcup_{i\in \mathbb{N}} f^i\left( \emptyset \right)$.
%% \end{note}
Each element of the semantics $\sem c l(\mu)$ is created by constructing a valuation $\nu$ of the ports $\lout l^*$ of the configuration that are needed for getting the value of the output ports of $l$ and projecting $\nu$ to these output ports. In fact, line \eqref{eq:semLineVal} says that $\nu$ provides a valuation of all needed ports. Line \eqref{eq:semLineOpenIn} says that the valuation of an open input-port must be taken into account if and only if we need this port. Line \eqref{eq:semLineIn} says that if we require the value of a connected input-port, then we use the value of the corresponding output-port. Line \eqref{eq:semLineOut} says that if we need a service provided by a layer, then the computation proceeds according to the layer's behavior function.

\begin{exmp}[Calculating a layer's configuration semantics]\label{ex:semantics}
Consider,
for example, the layered architecture configuration $c=(\{l_f,l_g\},A)$ in Fig.\ \ref{fig:Composition1}.

\begin{figure}%
\vspace{-.8ex}
\centering%
\begin{subfigure}[b]{0.4\textwidth}\centering%
\begin{tikzpicture}[iport/.style={circle,fill,inner sep=2pt},oport/.style={circle,draw,fill=white,inner sep=2pt}, layer/.style={draw,minimum height=20pt,minimum width=90pt}]
%\draw[dashed]  (-2.5,2) rectangle (2.5,-0.5);
iffalse Layer L1--------------------------------------------------------------- fi
\node[layer] (l1) at (0,1.5) {$l_g=(I_g,O_g,g)$};
\node [oport] at ($(l1)+(-45pt,5pt)$) {}; \node at ($(l1)+(-55pt,5pt)$) {$o_1$};
\node [iport] at ($(l1)+(-45pt,-5pt)$) {}; \node at ($(l1)+(-55pt,-5pt)$) {$i_1$};
\node [oport] (l1o1) at ($(l1)-(30pt,10pt)$) {}; \node at ($(l1o1)+(1.17ex,-1.27ex)$) {$o_1'$};
\node [iport] (l1i1) at ($(l1)+(30pt,-10pt)$) {}; \node at ($(l1i1)+(1.17ex,-1.27ex)$) {$i_1'$};
iffalse Layer L0--------------------------------------------------------------- fi
\node[layer] (l0) at (0,0) {$l_f=(I_f,O_f,f)$};
\node [iport] at ($(l0)-(45pt,5pt)$) {}; \node at ($(l0)-(+55pt,5pt)$) {$i_0$};
\node [oport] at ($(l0)+(-45pt,5pt)$) {}; \node at ($(l0)+(-55pt,5pt)$) {$o_0$};
\node [iport] (l0i1) at ($(l0)+(-30pt,10pt)$) {}; \node at ($(l0i1)+(-1.17ex,1.25ex)$) {$i_0'$};
\node [oport] (l0o1) at ($(l0)+(30pt,10pt)$) {}; \node at ($(l0o1)+(-1.4ex,1.25ex)$) {$o_0'$};
\draw[-latex] (l1o1) edge (l0i1) ; \draw[-latex]  (l0o1) edge (l1i1) ; \end{tikzpicture}
\caption{Before update.}%
\label{fig:Composition1}%
\end{subfigure}%
\qquad
\begin{subfigure}[b]{0.4\textwidth}\centering%
\begin{tikzpicture}[iport/.style={circle,fill,inner sep=2pt},oport/.style={circle,draw,fill=white,inner sep=2pt}, layer/.style={draw,minimum height=20pt,minimum width=90pt}]
%\draw[dashed]  (-2.5,2) rectangle (2.5,-0.5);
iffalse Layer L1--------------------------------------------------------------- fi
\node[layer] (l1) at (0,1.5) {$l_g=(I_g,O_g,g)$};
\node [oport] at ($(l1)+(-45pt,5pt)$) {}; \node at ($(l1)+(-55pt,5pt)$) {$o_1$};
\node [iport] at ($(l1)+(-45pt,-5pt)$) {}; \node at ($(l1)+(-55pt,-5pt)$) {$i_1$};
\node [oport] (l1o1) at ($(l1)-(30pt,10pt)$) {}; \node at ($(l1o1)+(1.17ex,-1.27ex)$) {$o_1'$};
\node [iport] (l1i1) at ($(l1)+(30pt,-10pt)$) {}; \node at ($(l1i1)+(1.17ex,-1.27ex)$) {$i_1'$};
iffalse Layer L0--------------------------------------------------------------- fi
\node[layer] (l0) at (0,0) {$l_f=(I_f,O_f,f')$};
\node [iport] at ($(l0)-(45pt,5pt)$) {}; \node at ($(l0)-(+55pt,5pt)$) {$i_0$};
\node [oport] at ($(l0)+(-45pt,5pt)$) {}; \node at ($(l0)+(-55pt,5pt)$) {$o_0$};
\node [iport] (l0i1) at ($(l0)+(-30pt,10pt)$) {}; \node at ($(l0i1)+(-1.17ex,1.25ex)$) {$i_0'$};
\node [oport] (l0o1) at ($(l0)+(30pt,10pt)$) {}; \node at ($(l0o1)+(-1.4ex,1.25ex)$) {$o_0'$};
\draw[-latex] (l1o1) edge (l0i1) ; \draw[-latex]  (l0o1) edge (l1i1) ; \end{tikzpicture}
\caption{After update.}%
\label{fig:Composition2}%
\end{subfigure}%
\caption{Layered architecture configuration consisting of two layers $l_{f}$ and $l_{g}$.% Note: this is not a basic layered architecture, since we have a cyclic syntactic dependency (see Section~\ref{sec:BasicVariant}).}\label{fig:Composition
}%
\end{figure}
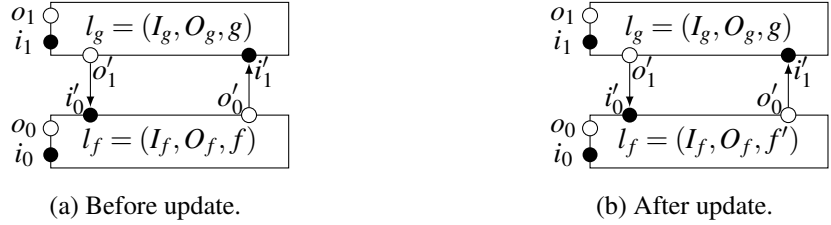%
\noindent
Here, $l_f = (\{i_0,i_0'\},\{o_0,o_0'\}, f)$ and $l_g = (\{i_1,i_1'\},\{o_1,o_1'\}, g)$
with %$\{i_{0},i_{0}',i_{1},i_{1}',\\o_{0},o_{0}',o_{1},o_{1}'\}\subseteq\mathtt{PORT}$ and
$\{i_{0},i_{0}',i_{1},i_{1}'\}\subseteq\mathcal{I}$, $\{o_{0},o_{0}',o_{1},o_{1}'\}\subseteq\mathcal{O}$, and $A=\{(i_0',o_1'),(i_1',o_0')\}$.

For the sake of this example, let's assume that $\{B,C,D,F,X,Y\}\subseteq\mathtt{SERVICE}$
and 
$\mathit{type}\left(i_0\right)=\left\{ B\right\}$, 
$\mathit{type}\left(i_0'\right) 
= \mathit{type}\left(o'_1\right) = \left\{ X, Y \right\}$,
$\mathit{type}\left(i_1\right)=\left\{ D\right\}$,
$\mathit{type}\left(i_1'\right) 
= \mathit{type}\left(o_0'\right) = \left\{ X, Y\right\}$, and
$\mathit{type}\left(o_0\right)=\mathit{type}\left(o_1\right)=\left\{ C,F\right\}$. 
%$\mathit{type}\left(o_0'\right)=\left\{ X, Y\right\}$,
%$\mathit{type}\left(o_1\right)=\left\{C,F\right\}$, and
%$\mathit{type}\left(o_1'\right)=\left\{X, Y\right\}$. 
Here, we use symbols $B,C,D,F$ for services at externally visible ports ($i_0,o_0,i_1,o_1$) and $X,Y$ for services at internal ports ($i_0',o_0',i_1',o_1'$).

The behavior functions are as follows:
\[
\begin{array}{r@{\quad}ll}
f\colon\overline{\left\{ i_{0},i_0'\right\}} \to\pset{\overline{\left\{ o_{0},o_0'\right\}}},& 
\left[i_{0},i_0'\mapsto B,X\right] &\mapsto \{\left[o_{0},o'_{0}\mapsto C,X\right]\},\\
&\left[i_{0},i_0'\mapsto B,Y\right] &\mapsto \{\left[o_{0},o'_{0}\mapsto F,X\right]\};\\
g\colon\overline{\left\{ i_{1},i_1'\right\} }\to\pset{\overline{\left\{ o_{1},o_1'\right\} }},& 
\left[i_{1},i_1'\mapsto D,X\right] &\mapsto \lbrace\left[o_1,o'_1\mapsto C,X\right]\rbrace,\\
&\left[i_1,i_1'\mapsto D,Y\right] &\mapsto \{\left[o_1,o'_1\mapsto F,Y\right]\}.
\end{array}
\]
Let us now apply Def.\ \ref{def:composition} to calculate $\sem c{l_f}(\mu)\subseteq \overline{\left\{o_0, o_0'\right\}}$
and $\sem c {l_g}(\mu)\subseteq \overline{\left\{o_1, o_1'\right\}}$ for $\mu = \left[i_{0},i_{1}\mapsto B,D\right]$.
Therefore, we first calculate all elements $\nu\in\val{\sel{c}}$. For our simple system, $\nu$ satisfies
\[
\begin{array}{l@{\quad}l}
\nu\left(i_{0}\right) = \mu(i_0) = B\,, & 
\nu\left(i_{1}\right) = \mu(i_1) = D\,, \\
\nu\left(i_0'\right)=\nu\left(o_1'\right) = X\,, &
\nu\left(i_1'\right)=\nu\left(o_0'\right) = X\,, \\
\nu\left(o_{0}\right) = C\,, &
\nu\left(o_{1}\right) = C.
\end{array}
\]
%Note that $\nu$ satisfies the constraints that should hold for each element of $\sem c{.}(\mu)$ according to Def.\ \ref{def:composition} and that $\nu$ is the only element of $\val{\sel{c}}$ which does so. 
Note that $\nu$ is the only element of $\val{\sel{c}}$ that satisfies the constraints that should hold for each element of $\sem c{.}(\mu)$ according to Def.\ \ref{def:composition}.
Thus,
\[\begin{array}{rl}
\sem c {l_f}(\mu) &=\lbrace\nu|_{\{o_0,o'_0\}}\rbrace =\lbrace\left[o_0,o'_0\mapsto C,X\right]\rbrace \ \text{and}\\
\sem c {l_g}(\mu) &=\lbrace\nu|_{\{o_1,o'_1\}}\rbrace =\lbrace\left[o_1,o'_1\mapsto C,X\right]\rbrace.
\end{array}\]
\qed
\end{exmp}

%Note that Def.~\ref{def:composition} is not constructive: it does not tell us how to calculate the configuration semantics for a layer, but only how to check whether a particular valuation is indeed valid. Later on, we provide a direct, recursive way of determining a layer's configuration semantics.

\subsection{Semantic Change}

%In a layered architecture configuration, the semantics of a single layer may change over time. 
A key concept in developing a piece of software is changing the semantics of a layer.
We model such a change of the semantics of a layer through an update function.
\begin{defn}
For a layer $l$ and a map $f\colon \val{\lin l} \to \pset{\val{\lout l}}$, a \emph{semantic update} $\lupdate lf$ is the layer $\left(\lin l,\lout l,f\right)$.
\end{defn}\noindent
Note that a semantic update is indeed a layer according to Def.~\ref{def:layer}.

The notion of semantic update easily generalizes to sets of layers $L\subseteq \mathcal{L}$:
\[\update Llf = \left(L \setminus l\right) \cup \{\lupdate lf\}.\defc\]
Finally, it also generalizes to layered architecture configurations:
\[\update clf = \left(\update {\layers c}lf,\conf c\right).\defc\]

\begin{exmp}[A semantic update for a layered architecture configuration]
Consider, for example, the layered architecture configuration $c = (L,A)$
depicted in Fig.~\ref{fig:Composition1} and described in Ex.~\ref{ex:semantics}.

If we change the behavior of layer $l_f$ to
\[
\begin{array}{r@{\quad}l}
f'\colon\overline{\left\{ i_{0},i_0'\right\}} \to\pset{\overline{\left\{ o_{0},o_0'\right\}}},& %~\mathrm{with}\
\left[i_{0},i_0'\mapsto B,X\right] \mapsto \{\left[o_{0},o'_{0}\mapsto C,X\right]\},\\
&\left[i_{0},i_0'\mapsto B,Y\right] \mapsto \{\left[o_{0},o'_{0}\mapsto F,Y\right]\},
\end{array}
\]
we get a new layered architecture configuration $\update c{l_f}{f'}$ where layer $l_f$ has changed to $(I_f,O_f,f')$ (see Fig~\ref{fig:Composition2}).
Applying Def.\ \ref{def:composition} to calculate $\sem {\update c{l_f}{f'}}{l_f}(\mu)$ $\subseteq$ $\overline{\left\{o_0,o_0'\right\} }$ and $\sem {\update c{l_f}{f'}}{l_g}(\mu)\subseteq \overline{\left\{o_1,o_1'\right\} }$ for $\mu = \left[i_{0},i_{1}\mapsto B,D\right]$, produces, in addition to $\nu$, a new valuation $\nu'$, which satisfies
\[
\begin{array}{l@{\quad}l}
\nu'\left(i_{0}\right) = \mu(i_0) = B\,, & 
\nu'\left(i_{1}\right) = \mu(i_1) = D\,, \\
\nu'\left(i_0'\right)=\nu'\left(o_1'\right) = Y\,, &
\nu'\left(i_1'\right)=\nu'\left(o_0'\right) = Y\,, \\
\nu'\left(o_{0}\right) = F\,, &
\nu'\left(o_{1}\right) = F\,.
\end{array}
\]
Note that $\nu'$ satisfies the constraints that should hold for each element of $\sem {\update c{l_f}{f'}}{.}(\mu)$ according to Def.\ \ref{def:composition} and that $\nu,\nu'$ are now the only elements of $\val{\sel{c}}$ which do so. 
Thus,
\[\begin{array}{rl}
\sem {\update c{l_f}{f'}}{l_f}(\mu) &=\lbrace \nu|_{\{o_0,o'_0\}}, \nu'|_{\{o_0,o'_0\}}\rbrace =\lbrace\left[o_0,o'_0\mapsto C,X\right],\left[o_0,o'_0\mapsto F,Y\right]\rbrace\ \text{and}\\
\sem {\update c{l_f}{f'}}{l_g}(\mu) &=\lbrace\nu|_{\{o_1,o'_1\}},\nu'|_{\{o_1,o'_1\}}\rbrace =\lbrace\left[o_1,o'_1\mapsto C,X\right],\left[o_1,o'_1\mapsto F,Y\right]\rbrace\,.
\end{array}\]\qed
\end{exmp}
In the above example as well as in general, a semantic update of a layered architecture configuration changes neither the input/output-ports nor the attachment, thus producing a layered architecture configuration again:
\begin{prop}
For a layered architecture configuration $c$, layer $l\in\layers c$,
and a map $f\colon \val{\lin l} \rightarrow \pset{\val{\lout l}}$, 
the layered architecture configuration update $\update clf$ is a layered architecture configuration.
\end{prop}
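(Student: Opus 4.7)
The plan is to unfold the definitions of semantic update and of a layered architecture configuration, and then check the two constraints of Def.~\ref{def:lac} one after the other, using crucially that $\lupdate l f = (\lin l, \lout l, f)$ preserves the input and output ports of $l$, and that the attachment $\conf c$ is unchanged by the update.

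First I would fix notation: write $c=(L,A)$ with $L=\layers c$ and $A=\conf c$, set $l'=\lupdate l f$ and $L'=(L\setminus\{l\})\cup\{l'\}$, so that $\update c l f = (L',A)$. As a preliminary sanity check I would verify that $l'\in\mathcal{L}$: by Def.~\ref{def:layer}, a layer is a triple $(I,O,g)$ with $I\subseteq\mathcal{I}$, $O\subseteq\mathcal{O}$ and $g\colon\val{I}\to\pset{\val{O}}$; for $l'$ the first two hold because they do for $l\in L$, and the third holds by the hypothesis on $f$. Hence $L'\subseteq\mathcal{L}$.

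Next I would observe the key identity $\bigcup_{m\in L'}\lin{m}=\bigcup_{m\in L}\lin{m}$ and analogously for $\lout{\cdot}$, which follows immediately from $\lin{l'}=\lin{l}$ and $\lout{l'}=\lout{l}$. From this it is straightforward that $A$ still has the right source and target sets to qualify as an attachment, i.e., $A\in((\bigcup_{m\in L'}\lin m)\dashrightarrow(\bigcup_{m\in L'}\lout m))$, as being a partial map is a property of $A$ alone, independent of the ambient sets.

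Then I would verify the two bulleted constraints of Def.~\ref{def:lac}. For the disjointness of ports of distinct layers in $L'$, I would do a case split on whether each of the two layers equals $l'$ or lies in $L\setminus\{l\}$; in every case the claim reduces to the same disjointness property in $c$ via the port-preserving identity above. For type compatibility, the attachment $A$ is unchanged and the type function $\mathit{type}$ depends only on ports, not on layer behavior, so $\forall (p_i,p_o)\in A\colon \mathit{type}(p_o)\subseteq\mathit{type}(p_i)$ is inherited directly from $c$. There is no real obstacle here; the only subtlety is the slight notational sloppiness ``$L\setminus l$'' meaning $L\setminus\{l\}$, which I would silently resolve when applying the disjointness constraint.
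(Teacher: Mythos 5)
Your proof is correct. The paper states this proposition without giving any proof at all, evidently regarding it as an immediate consequence of the facts that a semantic update preserves the updated layer's input and output ports and that the attachment is left untouched --- which is precisely the definitional unfolding you carry out, so your argument simply makes explicit what the authors considered routine.
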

Thus, all properties and notation introduced so far for layered architecture configurations are also valid for layered architecture configuration updates.

\subsection{Syntactic Dependency}

In a layered architecture configuration, the attachment relation induces a dependency relation between layers.
We say that a layer $l'$ \emph{syntactically} depends on another layer $l$, if an input port of $l'$ is connected to an output port of $l$.
\begin{defn}\label{def:syndep}
	Syntactic dependency for a layered architecture configuration $c$ is a relation $\syndep c\ \subseteq\ \layers c \times \layers c$ defined by
	\[
	\syndep[l][l']c\quad\stackrel{\text{def}}{\Leftrightarrow}\quad\exists\ o\in\lout{l},\ i\in\lin{l'}\colon\ \ o = \fconf c i.
	\]
\end{defn}
\begin{exmp}[Syntactic dependency]
  In the layered architecture configuration depicted in Fig.\
  \ref{fig:layered-architecture-configuration},	we have
  $\syndep[l_i][l_{i{+}1}]c$ for $i\in\{0,\ldots,n{-}1\}$ and no other syntactic
  dependencies.\qed
\end{exmp}
For a layered architecture configuration $c$, we denote by $\syndept c$ % $\ \subseteq\ \layers c \times \layers c$
the \emph{transitive} closure of %relation
$\syndep c$ and by $\syndeprt c$ %$\ \subseteq\ \layers c\times\layers c$
the \emph{reflexive-transitive} closure of $\syndep c$. Moreover,
we denote by $\syndep[][\_]c:\,\layers c\rightarrow\pset{\layers c}$,
defined via
\[
\syndep[][m]c=\{l\in\layers c\mid\syndep[l][m]c\}\qquad \text{for}\ m\in\layers c\,,\defc
\]
all layers $l$ that a given layer $m$ syntactically depends on ($\syndept[][\_]c$, $\syndeprt[][\_]c$ for [reflexive-] transitive dependency, respectively).

\begin{lem}\label{lem:ports}
  For a layered architecture configuration $c$, and layer $l\in \layers c$, the attachment closure $\lout l^*$ contains only ports of layers on which layer $l$ reflexively-transitively syntactically depends on. Formally, $\forall p\in \lout l^*\colon \syndeprt[{\proj p c}][l]c$.
\end{lem}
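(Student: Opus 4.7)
The plan is to prove the lemma by defining a candidate set $Q$ of ports that satisfies the closure conditions from Def.~\ref{def:composition} (more precisely, from \eqref{eq:semPortLayerOut}--\eqref{eq:semPortOut}) and then appealing to minimality of $\lout l^*$. Concretely, let
\[
Q \ = \ \bigl\{\,p\in \sel c \ \big|\ \syndeprt[{\proj p c}][l]{c}\,\bigr\}\,.
\]
If $Q$ is one of the sets over which the intersection in the definition of $\lout l^*$ is taken, then $\lout l^*\subseteq Q$, which is precisely the claim.

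So I would verify that $Q$ satisfies the three conditions in Def.~\ref{def:composition}. First, $\lout l\subseteq Q$: for $p\in\lout l$ we have $\proj p c = l$ (by Def.~\ref{def:proj}) and $\syndeprt[l][l]{c}$ by reflexivity. Second, condition \eqref{eq:semPortIn}: suppose $(i,o)\in\conf c$ and $i\in Q$. Let $k=\proj i c$ and $k'=\proj o c$. Then $i\in\lin k$, $o\in\lout{k'}$, and $o=\fconf c i$, so by Def.~\ref{def:syndep} we have $\syndep[k'][k]{c}$. Combining this with the hypothesis $\syndeprt[k][l]{c}$ yields $\syndeprt[k'][l]{c}$, i.e., $o\in Q$. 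Third, condition \eqref{eq:semPortOut}: suppose $o\in\sel[o] c\cap Q$ and $i\in\lin{\proj o c}$. Let $k=\proj o c$, so $\proj i c = k$ by Def.~\ref{def:lac} (ports are not shared) and by hypothesis $\syndeprt[k][l]{c}$; hence $i\in Q$.

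With all three conditions verified and $Q\subseteq \sel c$ by construction, $Q$ occurs in the family whose intersection defines $\lout l^*$, so $\lout l^*\subseteq Q$, which is exactly the stated conclusion.

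The main technical point, and the only place where some care is required, is the second condition: one has to translate an attachment edge $(i,o)$ into a single step of the syntactic dependency relation in the right direction (from the layer owning $o$ to the layer owning $i$), and then prepend it to the existing reflexive-transitive chain to $l$. Everything else is bookkeeping using uniqueness of the owning layer (Def.~\ref{def:lac} and Def.~\ref{def:proj}) and reflexivity/transitivity of $\syndeprt{c}$.
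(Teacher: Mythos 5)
Your proof is correct, but it takes a genuinely different route from the paper's. The paper works ``from below'': it introduces a monotone operator $f$ on $\pset{\sel c}$ whose clauses mirror \eqref{eq:semPortLayerOut}--\eqref{eq:semPortOut}, identifies $\lout l^*$ with $\bigcup_{n\in\NNZ}f^n(\emptyset)$ by a fixed-point argument, and then proves the claim by induction on $n$, with one case per clause of $f$. You work ``from above'': you exhibit $Q=\{p\in\sel c\mid \syndeprt[{\proj p c}][l]c\}$ as a member of the family whose intersection defines $\lout l^*$ in Def.~\ref{def:composition}, and conclude $\lout l^*\subseteq Q$ by minimality of the intersection. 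The combinatorial content is the same --- your three verification steps correspond exactly to the paper's three induction cases (reflexivity for $\lout l$, prepending one dependency step for an attachment edge, and ownership of input ports) --- but your argument uses the definition of $\lout l^*$ as given and therefore does not need the auxiliary iteration characterization, which the paper only sketches (``using the fixed point theorem of Tarski one can show\dots''). This makes your proof somewhat more self-contained. You also correctly handle the one delicate point, the direction of the dependency arrow: an edge $(i,o)\in\conf c$ yields $\syndep[{\proj o c}][{\proj i c}]c$, i.e., the layer owning $i$ depends on the layer owning $o$, and this single step is then prepended to the existing chain $\syndeprt[{\proj i c}][l]c$ to obtain $\syndeprt[{\proj o c}][l]c$.
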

\begin{proof}
  Let $f\colon \pset{\sel c} \to \pset{\sel c}$,\\
  $P \mapsto\ 
  \lout{l}
  \cup \left\lbrace o \mid \exists\,i{\in}P \colon \left( i,o\right) \in\conf c \right\rbrace
  \cup \bigcup\left\lbrace \lin r \mid r \in \layers c \,\land\, P \cap \lout r \neq \emptyset \right\rbrace$.\\
  Using the fixed point theorem of Tarski one can show that $\lout l^*=\bigcup_{n\in \NNZ} f^n(\emptyset)$.
  %Since $\lout l^*$ can be constructed inductively, the proof is by induction according to Note~\ref{note:inductive}.
  Fix a layered architecture configuration $c$ and one of its layers $l \in \layers c$. We show that $\forall n\in\NNZ\ \forall\,p\in f^n(\emptyset) \colon {\syndeprt[{\proj p c}][l]c}$ by induction on $i$.
  \begin{compactitem}
  \item[``$\forall p\in f^0(\emptyset) \colon {\syndeprt[{\proj p c}][l]c}$'':] Since $f^0(\emptyset)=\emptyset$, the statement is vacuously true.
  \item[``$\forall p\in f^n(\emptyset) \colon {\syndeprt[{\proj p c}][l]c}$ implies $\forall p\in f^{n+1}(\emptyset) \colon {\syndeprt[{\proj p c}][l]c}$'':]
  Fix $p \in f^{n+1}(\emptyset)$. %According to Note~\ref{note:inductive}, $p\in\lout{l}$ or $p \in \left\lbrace o \mid \exists i \in f^n\left( \emptyset \right) \colon \left( i,o\right) \in\conf c \right\rbrace$ or $p \in \left\lbrace i \in \lin r \mid r \in \layers c \land f^n\left( \emptyset \right) \cap \lout r \neq \emptyset \right\rbrace$.
%According to Note~\ref{note:inductive}, at
At 
least of the following cases is true.
  
  Case $p\in\lout{l}$: By Def.~\ref{def:proj}, $\proj p c=l$ and by reflexivity, $\syndeprt[l][l]c$. Thus, $\syndeprt[{\proj p c}][l]c$.
  
  Case $p \in \left\lbrace o \mid \exists i \in f^n(\emptyset) \colon \left( i,o\right) \in\conf c \right\rbrace$: Then there is an $i \in f^n(\emptyset)$ such that $\left(i,p\right)\in \conf c$. By Def.~\ref{def:proj} and~\ref{def:lac} we have $i\in \lin{\proj i c}$ and $p \in \lout{\proj p c}$. From $\left(i,p\right)\in \conf c$ we obtain $\syndep[{\proj p c}][{\proj i c}]c$ by Def.~\ref{def:syndep}. Since $i \in f^n(\emptyset)$, we have $\syndeprt[{\proj i c}][l]c$ by induction hypothesis. By transitivity, $\syndeprt[{\proj p c}][l]c$.
  
  Case $p \in  \bigcup\left\lbrace \lin r \mid r \in \layers c \land f^n(\emptyset) \cap \lout r \neq \emptyset \right\rbrace$: Then there is an $r\in \layers c$ such that $f^n(\emptyset) \cap \lout r \neq \emptyset$ and $p \in \lin r$. Since $f^n(\emptyset) \cap \lout r \neq \emptyset$, we have $\syndeprt[r][l]c$ by induction hypothesis. Since $p \in \lin r$, we have $\proj p c=r$ by Def.~\ref{def:proj}. Thus, we conclude $\syndeprt[{\proj p c}][l]c$.\vspace{-12pt}
  \end{compactitem}
\end{proof}

\noindent
Note that the syntactic dependency relation is not transitive in general: just because a layer $L_1$ depends on another layer $L_2$ which depends on a third layer $L_3$, this does not necessarily mean that layer $L_1$ depends on layer $L_3$.

\subsection{Semantic Dependency}
Besides the syntactic dependency relation between layers of a layered architecture configuration we also have a semantic dependency relation between those layers. A layer $l'$ semantically depends on a layer $l$ if updating $l$ may influence the configuration semantics of $l'$.
\begin{defn}\label{def:semdep}
Semantic dependency for a layered architecture configuration $c$ is a relation $\semdep c\ \subseteq\ \layers c\times\layers c$ defined by
\[
\begin{array}{r@{\ \ }c@{\ \ }l@{\qquad}l}
  \semdep[l][l]c & \stackrel{\mathrm{def}}{\Leftrightarrow} & \exists f\in\left(\val{\lin l}\to\pset{\val{\lout l}}\right)\colon \sem c{l}\neq\sem{\update clf}{\lupdate lf} & \text{for all}\ l\ \text{in}\ \layers c\ \text{and}\\
  \semdep[l][l']c & \stackrel{\mathrm{def}}{\Leftrightarrow} & \exists f\in\left(\val{\lin l}\to\pset{\val{\lout l}}\right)\colon \sem c{l'}\neq\sem{\update clf}{l'} & \text{for all}\ l\neq l'\ \text{in}\ \layers c\,.
\end{array}
\]
\end{defn}\noindent
%
%In the following we provide a simple example for the notion of semantic dependency.
Now we provide simple examples of semantic dependency and independence.
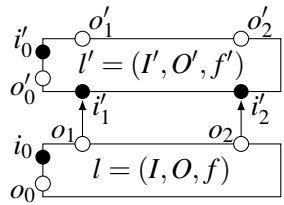
\begin{figure}[htbp]%
\centering%
\begin{tikzpicture}[iport/.style={circle,fill,inner sep=2pt},oport/.style={circle,draw,fill=white,inner sep=2pt},
layer/.style={draw,minimum height=20pt,minimum width=90pt}]
%\draw[dashed]  (-4,10.5) rectangle (2.5,0.5);
\node[layer] (l1) at (-0.5,3.4) {$l'=(I',O',f')$};

\node [iport] (l1i0) at ($(l1)+(-45pt,5pt)$) {};
\node at ($(l1i0)+(-1.5ex,.8ex)$) {$i_0'$};

\node [oport] (l1o0) at ($(l1)-(45pt,5pt)$) {};
\node at ($(l1o0)+(-1.5ex,-.8ex)$) {$o_0'$};

\node [oport] (l1o1) at ($(l1)+(-30pt,10pt)$) {};
\node at ($(l1o1)+(1.5ex,1.3ex)$) {$o_1'$};

\node [oport] (l1o2) at ($(l1)+(30pt,10pt)$) {};
\node at ($(l1o2)+(1.5ex,1.3ex)$) {$o_2'$};

\node [iport] (l1i1) at ($(l1)-(30pt,10pt)$) {};
\node at ($(l1i1)+(1.5ex,-1.3ex)$) {$i_1'$};

\node [iport] (l1i2) at ($(l1)+(30pt,-10pt)$) {};
\node at ($(l1i2)+(1.5ex,-1.3ex)$) {$i_2'$};

iffalse Layer L0--------------------------------------------------------------- fi

\node[layer] (l0) at (-0.5,2) {$l=(I,O,f)$};

\node [iport] (l0i0) at ($(l0)+(-45pt,5pt)$) {};
\node at ($(l0i0)+(-1.5ex,.8ex)$) {$i_0$};

\node [oport] (l0o0) at ($(l0)-(45pt,5pt)$) {};
\node at ($(l0o0)+(-1.5ex,-.8ex)$) {$o_0$};

\node [oport] (l0o1) at ($(l0)+(-30pt,10pt)$) {};
\node at ($(l0o1)+(-1.4ex,.8ex)$) {$o_1$};

\node [oport] (l0o2) at ($(l0)+(30pt,10pt)$) {};
\node at ($(l0o2)+(-1.5ex,.8ex)$) {$o_2$};

\draw[-latex] (l0o1) edge (l1i1) ;
\draw[-latex] (l0o2) edge (l1i2);
\end{tikzpicture}%
\caption{A simple layered architecture configuration with $2$ layers.}%
\label{fig:simple-layered-architecture-configuration}%
\end{figure}%

\begin{exmp}[Semantic dependency]
\label{ex:lowerdoesinfluenceupper}
As an example, consider the simple layered architecture configuration depicted in Fig.~\ref{fig:simple-layered-architecture-configuration} where changing the behavior of layer $l$ does indeed influence the configuration semantics of layer $l'$.

In order to see this, we first need to formally define the behavior functions $f \in \left(\val I\rightarrow\pset{\val O}\right)$ and $f' \in \left(\val {I'}\rightarrow\pset{\val {O'}}\right)$. Let us assume that $\{A,B,C,D,E,F,X,Y\}$ $\subseteq$ $\mathtt{SERVICE}$
and $\mathit{type}(i_0)=\{A\}$, $\mathit{type}(o_0)=\{B\}$, $\mathit{type}(o_1)=\mathit{type}(i_1')=\{X\}$, $\mathit{type}(o_2)=\mathit{type}(i_2')=\{Y,Z\}$, $\mathit{type}(i_0')=\{C\}$, $\mathit{type}(o_0')=\{D\}$, $\mathit{type}(o_1')=\{E\}$, and $\mathit{type}(o_2')=\{F,G\}$. Here, we use symbols $A,B,C,D,E,F$ for services occurring at externally visible ports ($i_0,o_0,i_0',o_0', o_1',o_2'$) and $X,Y$ for services occurring at internal ports ($o_1,o_2,i_1',i_2'$).

Let $f:\overline{\{i_0\} }\to\pset{\overline{\{o_0,o_1,o_2\}}}$
and $f':\overline{\{i_0',i_1',i_2'\}} \to \pset{\overline{\{o_0',o_1',o_2'\}}}$ be defined by%
\[\begin{array}{rl}
f\left(\left[i_{0}\mapsto A\right]\right) & =\left\{[o_{0},o_{1},o_{2}\mapsto B,X,Y]\right\}\,,\\
f'\left(\left[i_0',i_{1}',i_{2}'\mapsto C,X,Y\right]\right) & =\left\{[o_{0}',o_{1}',o_{2}'\mapsto D,E,F]\right\}\,,\\
f'\left(\left[i_0',i_{1}',i_{2}'\mapsto C,X,Z\right]\right) & =\left\{[o_{0}',o_{1}',o_{2}'\mapsto D,E,G]\right\}\,.
\end{array}\]
Now we calculate $\sem cl:\overline{\{i_0,i'_0\}}\to\pset{\overline{\{ o_0,o_1,o_2\}}}$
and $\sem c{l'}:\overline{\{ i_0, i_0'\} }\to\pset{\overline{\{ o_0',o_1',o_2'\}}}$ 
by Def.\ \ref{def:composition}: 
\[\begin{array}{rl}
\sem cl\left([i_0, i'_0\mapsto A, C]\right)&=\lbrace\left[o_0, o_1,o_2\mapsto B,X,Y \right]\rbrace\,,\\
\sem c{l'}\left([i_0,i_0'\mapsto A, C]\right)&=\lbrace\left[o_0', o_1',o_2'\mapsto D,E,F \right]\rbrace\,.
\end{array}\]
If we now replace $f$ by $g:\overline{\left\{i_0\right\}}\to\pset{\overline{\left\{o_0,o_1,o_2\right\}}}$, defined as
\[g\left([i_0\mapsto A]\right) = \lbrace\left[o_{0},o_{1},o_{2}\mapsto B,X,Z\right]\rbrace\,,\]
we can see that $\semdep[l][l']c$, because calculating
$\sem{\update clg}{l}:\val{\{i_0,i'_0\}}\to\pset{\overline{\{o_0,o_1,o_2\}}}$ and $\sem{\update clg}{l'}:\overline{\{i_0,i_0'\}}\to\pset{\val{\{o_0',o_1',o_2'\}}}$ by Def.~\ref{def:composition} results in
\[\begin{array}{rl}
\sem{\update clg}{l}\left([i_0,i'_0\mapsto A, C]\right)&=\lbrace\left[o_0, o_1,o_2\mapsto B,X,Z \right]\rbrace\,,\\
\sem{\update clg}{l'}\left([i_0, i_0'\mapsto A, C]\right)&=\lbrace\left[o_0', o_1',o_2'\mapsto D,E,G \right]\rbrace\,.
\end{array}\]
and we see that $\sem{\update clg}{l'}\neq \sem{}{l'}$.\qed
\end{exmp}

\begin{exmp}[Semantic independence]
In the simple layered architecture configuration in Fig.~\ref{fig:simple-layered-architecture-configuration} changing the behavior of layer $l'$ does not influence the configuration semantics of layer $l$.

Let us assume behavior functions $f:\val I\rightarrow\pset{\val O}$ and $f':\val {I'}\rightarrow\pset{\val {O'}}$ of Ex.~\ref{ex:lowerdoesinfluenceupper}. Then we can see that there is no behavior function $g:\val {I'}\rightarrow\pset{\val {O'}}$ such that 
$\sem{\update{c}{l'}{g}}{l}\neq \sem{}{l}$. This is the case, because the semantics of $l$ does not depend on any inputs from $l'$. Thus, we have $\notsemdep[l'][l]c$.\qed
\end{exmp}

\subsection{Relating Syntactic and Semantic Dependencies}
Having a formal model of layered architecture configurations allows us to %further 
analyze the relationship between syntactic and semantic dependencies.

An interesting property is that if layers are syntactically dependent, this does not necessarily mean that they are also semantically dependent.
\begin{exmp}[Syntactic dependency does not necessarily imply semantic dependency]
Consider a single layer with just one input and just one output port that are 
typed by the empty set of services and attached to each other. According to 
Def.~\ref{def:syndep}, the layer depends on itself syntactically. However, it 
is not possible to change the layers configuration semantics at all, since the 
layer's behavior function is the only map from the (empty) set of valuations 
of the input port to the (empty) set of valuations of the output port. Thus, 
according to Def.~\ref{def:semdep}, the layer does not depend on itself 
semantically. In general, if $\mathtt{SERVICE}=\emptyset$, any configuration 
with a nonempty attachment will have a pair of layers with this property.\qed
\end{exmp}

However, under certain circumstances, syntactic dependency does indeed imply semantic dependency.
\begin{defn}
  A layered architecture configuration $c$ is \emph{usable} iff there is at
  least one valuation of open input-ports such that the configuration semantics
  of every layer produces at least one output valuation on this input. Formally:\\
  $c\ \text{usable}\quad \stackrel{\mathrm{def}}{\Longleftrightarrow}\quad \exists\, \mu \in \val{\sel[\mathit{in}] c} \ \forall\, l \in \layers c \colon \sem cl\left( \mu\right) \not=\emptyset$.
\end{defn}
%Informally speaking, a layered architecture configuration is usable iff
%all layers do something under some input.
%
\begin{thm}
	For a usable layered architecture configuration $c$ the reflexive-transitive closure of syntactic dependency implies semantic dependency. Formally: $c$ usable $\Rightarrow$ $\syndeprt c \subseteq \semdep c$.
\end{thm}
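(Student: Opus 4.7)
The plan is to proceed by induction on the length of a chain witnessing $\syndeprt[l][l']c$, and in every case to use the same candidate update: the ``sabotage'' update $g\colon\val{\lin l}\to\pset{\val{\lout l}}$ defined by $g(\nu)=\emptyset$ for every $\nu\in\val{\lin l}$. Intuitively, this update starves every semantic computation that has to consult~$l$, which should propagate all the way to~$l'$. Throughout, fix once and for all a valuation $\mu\in\val{\sel[\mathit{in}]c}$ supplied by the usability hypothesis, so that $\sem cm(\mu)\neq\emptyset$ for every $m\in\layers c$.

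Reflexive base case $l=l'$: Clause \eqref{eq:semPortLayerOut} gives $\lout l\subseteq\lout{\lupdate lg}^*$ in the updated configuration $\update clg$ (the attachment, the ports of the layer and the port-to-layer map $\proj{.}{.}$ are unchanged by the update). Pick any $o\in\lout l$: clause \eqref{eq:semLineOut} applied to $o$ in the definition of $\sem{\update clg}{\lupdate lg}(\mu)$ demands some $\xi\in\lfun{\proj o{\update clg}}(\nu|_{\lin l})=g(\nu|_{\lin l})=\emptyset$, which is impossible. Hence $\sem{\update clg}{\lupdate lg}(\mu)=\emptyset\neq\sem cl(\mu)$, witnessing $\semdep[l][l]c$. (This requires $\lout l\neq\emptyset$; the degenerate case $\lout l=\emptyset$ would need either a standing non-triviality assumption on layers or a strengthening of usability, since such a layer admits no semantic self-dependency.)

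Inductive step with a chain $l=l_0\prec_c l_1\prec_c\cdots\prec_c l_n=l'$ of length $n\ge1$: by Definition~\ref{def:syndep} one can fix, for every $k\in\{0,\ldots,n-1\}$, ports $o_k\in\lout{l_k}$ and $i_{k+1}\in\lin{l_{k+1}}$ with $o_k=\fconf c{i_{k+1}}$. The crux is a side lemma: $o_0\in\lout{l'}^*$. I would prove it by a downward induction on the chain, interleaving clauses \eqref{eq:semPortIn} and \eqref{eq:semPortOut}. First, \eqref{eq:semPortLayerOut} gives $\lout{l'}\subseteq\lout{l'}^*$; picking any $o^{\dagger}\in\lout{l'}$, clause \eqref{eq:semPortOut} lifts $\lin{l'}=\lin{\proj{o^{\dagger}}c}\subseteq\lout{l'}^*$, so $i_n\in\lout{l'}^*$, and clause \eqref{eq:semPortIn} applied to $(i_n,o_{n-1})\in\conf c$ gives $o_{n-1}\in\lout{l'}^*$. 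The step repeats the same pair of clauses: from $o_k\in\lout{l'}^*$, clause \eqref{eq:semPortOut} yields $\lin{l_k}\subseteq\lout{l'}^*$, so $i_k\in\lout{l'}^*$, and \eqref{eq:semPortIn} yields $o_{k-1}\in\lout{l'}^*$. Iterating down to $k=1$ produces $o_0\in\lout l\cap\lout{l'}^*$. Now clause \eqref{eq:semLineOut} applied to $o_0$ in the computation of $\sem{\update clg}{l'}(\mu)$ demands $\xi\in\lfun{\proj{o_0}{\update clg}}(\nu|_{\lin l})=g(\nu|_{\lin l})=\emptyset$, which is impossible; hence $\sem{\update clg}{l'}(\mu)=\emptyset\neq\sem c{l'}(\mu)$ by usability, so $\semdep[l][l']c$.

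The hard part will be making the downward chain induction airtight, since it interleaves three closure clauses of Definition~\ref{def:composition} in a specific order; picking a clean invariant such as ``$\{o_k,\ldots,o_{n-1}\}\cup\bigcup_{j=k+1}^{n}\lin{l_j}\subseteq\lout{l'}^*$'' should keep the bookkeeping manageable. A secondary subtlety is the case $\lout{l'}=\emptyset$: there $\lout{l'}^*=\emptyset$, the downward induction has no anchor, and $\semdep[l][l']c$ can genuinely fail; in a clean write-up I would flag this and carry the harmless assumption $\lout{m}\neq\emptyset$ for every layer~$m$, or observe that such a degenerate $l'$ does not arise under the intended reading of ``layer''.
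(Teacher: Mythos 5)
Your argument is essentially the paper's own proof: the same sabotage update $g=\lambda\nu.\emptyset$, the same appeal to usability to get $\sem c{l'}(\mu)\neq\emptyset$ before the update, and the same propagation of emptiness through the attachment closure to force $\sem{\update clg}{l'}(\mu)=\emptyset$, with your downward chain induction merely making explicit the step (that an output port of $l$ lies in $\lout{l'}^*$) which the paper's one-line induction over the layers $r$ with $\lout l\cap\lout r^*\neq\emptyset$ leaves implicit. The degenerate cases you flag (a layer with $\lout{l'}=\emptyset$, for which $\lout{l'}^*=\emptyset$ and hence $\sem c{l'}(\mu)=\{\emptyset\}$ no matter how any layer is updated) are genuine counterexamples to the statement as written that the paper's proof also silently skips, so recording them is a strength of your write-up rather than a defect.
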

\begin{proof}
  Let $c$ be usable and $\syndeprt[l][l']c$. So there is some $\mu \in\val{\sel[\mathit{in}] c}$ such that $\sem c{l'}(\mu)\neq\emptyset$. Let $g:\val{\lin l}\to\pset{\val{\lout l}}$, $i\mapsto\emptyset$. If $l=l'$, then $\sem{\update{c}{l'}{g}}{\lupdate{l'}{g}}(\mu) =\emptyset$. If $l\neq l'$, we inductively follow that all the layers $r\neq l$ such that $\lout l\cap\lout r^*\neq\emptyset$ satisfy $\sem{\update clg}{r}(\mu)=\emptyset$. In particular, $\sem{\update clg}{l'}\left(\mu\right)=\emptyset$.
\end{proof}

Vice versa, if layers are semantically dependent, they are not necessarily (directly) syntactically dependent.
\begin{exmp}[Semantic dependency does not necessarily imply syntactic dependency]%
Consider a single layer with one output port that is typed by two services and no other ports. According to Def.~\ref{def:semdep}, it depends on itself semantically. However, according to Def.~\ref{def:syndep}, it does not depend on itself syntactically. Indeed, it does not have any syntactic dependency at all.

A less trivial example is demonstrated in Fig.~\ref{fig:threeLayers}, where $\mathtt{SERVICE}=\{A,B\}$, all ports are typed by $\mathtt{SERVICE}$, $\lfun l=\lambda \nu{\in}\val{\emptyset}.\,\{[o\mapsto A]\}$, $\lfun {l'}=\lambda \nu\in\val{\{i'\}}.\,\{[o'\mapsto\nu(i')]\}$, and $\lfun {l''}=\lambda \nu\in\val{\{i''\}}.\,\{[o''\mapsto\nu(i'')]\}$. We have $\semdep[l][l'']c$, but $\nsyndep[l][l'']c$.
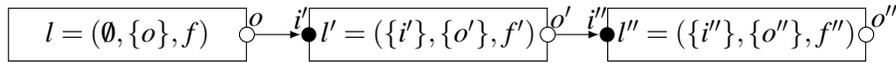
\begin{figure}[htbp]%
\centering%
\begin{tikzpicture}[iport/.style={circle,fill,inner sep=2pt},oport/.style={circle,draw,fill=white,inner sep=2pt},layer/.style={draw,minimum height=20pt,minimum width=90pt}]
\node[layer] (l2) at (10.1,1) {$l''=(\{i''\},\{o''\},f'')$};
\node[oport] (o2) at ($(l2.east)$){};
\node at ($(o2)+(7pt,7pt)$) {$o''$};
\node[iport] (l2i) at (l2.west){};
\node at ($(l2i)+(-.13,.22)$) {$i''$};
\node[layer] (l1) at (6,1) {$l'=(\{i'\},\{o'\},f')$};
\node[oport] (l1o) at ($(l1.east)$){};
\node at ($(l1o)+(.18,.23)$) {$o'$};
\node[iport] (l1i) at ($(l1.west)$){};
\node at ($(l1i)+(-.1,.22)$) {$i'$};
\node[layer] (l0) at (2,1) {$l=(\emptyset,\{o\},f)$};
\node[oport] (l0o) at ($(l0.east)$){};
\node at ($(l0o)+(.12,.2)$) {$o$};
\draw[-latex] (l0o) -- (l1i);
\draw[-latex] (l1o) -- (l2i);
\end{tikzpicture}%
\caption{A three-layered configuration with $\layers c=\{l,l',l''\}$ and $\conf c$ as shown.}%
\label{fig:threeLayers}%
\vspace{-.8ex}
\end{figure}\qed%
\end{exmp}
\noindent
As we see, changing the behavior of a single layer may impact not only the configuration semantics of directly depending layers, but also of layers which transitively depend on the modified layer.

This property of layered architecture configurations implies that a test after a change of a layers behavior should include tests of the behavior of all semantically dependent layers. As we will see in a moment, there is a bound on how many layers one should test. 
\begin{thm}
Semantic dependency implies the reflexive-transitive closure of syntactic dependency. Formally: $\semdep{c}\subseteq\syndeprt{c}$.
\end{thm}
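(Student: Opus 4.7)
The plan is to prove the contrapositive: assume $\nsyndeprt[l][l']c$, and show $\notsemdep[l][l']c$. Since $\syndeprt c$ is reflexive, $\nsyndeprt[l][l']c$ forces $l\neq l'$, so only the second clause of Def.~\ref{def:semdep} is relevant; we must show that $\sem c{l'}=\sem{\update clf}{l'}$ for every $f\colon\val{\lin l}\to\pset{\val{\lout l}}$.

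First I would invoke Lemma~\ref{lem:ports} applied to $l'$: every port $p\in\lout{l'}^*$ satisfies $\syndeprt[{\proj p c}][l']c$. Combined with the assumption $\nsyndeprt[l][l']c$, this yields $\proj p c\neq l$ for every $p\in\lout{l'}^*$, so no input or output port of $l$ lies in $\lout{l'}^*$. Next I would observe that the defining conditions \eqref{eq:semPortLayerOut}--\eqref{eq:semPortOut} of $\lout{l'}^*$ mention only $\layers c$, the port sets $\lin\cdot,\lout\cdot$, and the attachment $\conf c$; a semantic update $\update clf$ modifies none of these, only $\lfun l$. Hence $\lout{l'}^*$ and $\sel[\mathit{in}]c$ are identical in $c$ and in $\update clf$.

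It remains to compare the two constraint sets \eqref{eq:semLineVal}--\eqref{eq:semLineOut} for $\sem c{l'}(\mu)$ and $\sem{\update clf}{l'}(\mu)$. Line \eqref{eq:semLineVal} uses $\val{\lout{l'}^*}$, which is the same in both configurations. Line \eqref{eq:semLineOpenIn} uses only $\mu|_{\lout{l'}^*}$ and $\sel[\mathit{in}]c$. Line \eqref{eq:semLineIn} uses only the attachment $\conf c$. Line \eqref{eq:semLineOut} quantifies over $o\in\sel[o]c\cap\lout{l'}^*$ and invokes $\lfun{\proj o c}$; but by the first step $\proj o c\neq l$ for every such $o$, so the behavior functions invoked are precisely the behavior functions of layers in $\layers c\setminus\{l\}$, which are unaffected by the update $\update clf$. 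Therefore the two constraint sets coincide, giving $\sem c{l'}(\mu)=\sem{\update clf}{l'}(\mu)$ for every $\mu$ and every $f$, which is $\notsemdep[l][l']c$.

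The only real obstacle is making sure that no hidden dependence on $\lfun l$ slips into Def.~\ref{def:composition} when computing $\sem c{l'}$; Lemma~\ref{lem:ports} is tailored precisely to rule this out by pinning the attachment closure $\lout{l'}^*$ inside the reflexive-transitive syntactic cone of $l'$. Once that containment is in hand, the remaining verification is a straightforward line-by-line comparison of the four defining clauses.
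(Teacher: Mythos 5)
Your proposal is correct and follows essentially the same route as the paper's proof: argue the contrapositive, use Lemma~\ref{lem:ports} to conclude that no port of $l$ lies in $\lout{l'}^*$, and then observe that none of the defining clauses \eqref{eq:semLineVal}--\eqref{eq:semLineOut} of $\sem{\cdot}{l'}(\mu)$ can be affected by replacing $\lfun l$ with $f$. Your line-by-line verification of the four clauses just spells out what the paper compresses into ``we readily conclude.''
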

\begin{proof}Fix a layered architecture configuration $c$, its layers $l,l'\in \layers c$ such that $\nsyndeprt[l][l']c$; we will show $\notsemdep[l][l']c$.\\
Notice that $l\neq l'$. Fix arbitrary $f\colon\val{\lin l} \to \pset{\val{\lout l}}$ and $\mu\in\val{\sel[\mathit{in}] c}$. We are going to show that $\sem c{l'}(\mu)=\sem{\update clf}{l'}(\mu)$.
\begin{compactitem}
\item[``$\subseteq$'':] Let $\kappa \in\sem c{l'}(\mu)$. By \eqref{eq:semLineVal} of Def.~\ref{def:composition} there is some $\nu\in\val{\lout l^*}$ such that $\nu|_{\lout{l'}}=\kappa$ and  \eqref{eq:semLineOpenIn}, \eqref{eq:semLineIn}, \eqref{eq:semLineOut} hold for $\kappa$ and $l'$. Since $l'$ does not reflexively-transitively syntactically depend on $l$, by Lemma~\ref{lem:ports} no ports of $l$ are in $\lout {l'}^*$ and we readily conclude that \eqref{eq:semLineVal}, \eqref{eq:semLineOpenIn}, \eqref{eq:semLineIn}, \eqref{eq:semLineOut} still hold for $\kappa$ and $l'$ if $c$ is replaced by $\update clf$. Thus $\kappa \in \sem{\update clf}{l'}(\mu)$.
\item[``$\supseteq$'':]Analogously.\vspace{-15pt}
\end{compactitem}
\end{proof}\noindent
Informally speaking, this property allows us now to restrict testing after a modification to only those layers which (reflexively-)transitively depend on the modified layer.
\begin{corl}For usable layered architecture configurations, the semantic dependency and the reflexive-transitive closure of the syntactic dependency are the same.
\end{corl}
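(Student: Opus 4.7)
The plan is to obtain this corollary as an immediate consequence of the two theorems that immediately precede it in the same subsection. The corollary asserts the set-theoretic equality $\semdep{c} = \syndeprt{c}$ for any usable layered architecture configuration $c$, and equality of two relations is standardly shown by establishing two inclusions, both of which have already been proven as the preceding theorems.

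Concretely, I would first fix an arbitrary usable layered architecture configuration $c$. For the inclusion $\syndeprt{c} \subseteq \semdep{c}$, I would invoke the first theorem of this subsection, whose hypothesis (usability of $c$) is precisely the hypothesis of the corollary. For the reverse inclusion $\semdep{c} \subseteq \syndeprt{c}$, I would invoke the second theorem, which applies unconditionally to every layered architecture configuration and therefore in particular to the usable one we have fixed. Combining the two inclusions yields the desired equality.

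There is essentially no obstacle here, since both ingredients are already in hand and the combination is purely propositional. The only point that deserves an explicit (one-line) remark is that the second theorem does not require the usability assumption, so the usability hypothesis is used only for the ``$\supseteq$'' direction. No new machinery, no case analysis, and no appeal to Def.~\ref{def:composition} or Lemma~\ref{lem:ports} is necessary beyond what has already been carried out inside the two cited theorems.
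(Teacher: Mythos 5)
Your proposal is correct and is exactly the intended argument: the paper states this as a corollary without proof precisely because it follows immediately by combining the inclusion $\syndeprt{c}\subseteq\semdep{c}$ from the first theorem (which uses usability) with the unconditional inclusion $\semdep{c}\subseteq\syndeprt{c}$ from the second. Your remark that usability is needed only for one direction is accurate.
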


\section{Conclusion}\label{sec:conclusion}
With this work we provided an abstract model for the layered architecture
style. Our model is based on the notion of services and ports which
can supply services. A layer consists of input and output ports and
is modeled as a function from input-port valuations to output-port
valuations. A layered architecture configuration consists then of
some layer instances and an attachment describing the connections between
layers' input and output ports.

We have given a formal definition of syntactic and semantic dependency between
layers. Though syntactic and semantic dependencies do not necessarily imply
one another, we have shown that the semantic dependency implies
the reflexive-transitive closure of the syntactic dependency, and the reverse
also holds for usable configurations.
%We identify a base variant
%which requires the syntactic dependency relation to be acyclic and
%show that it allows for a recursive definition of layer semantics.
%Finally we provide a proof method for basic layered architectures
%based on well-founded induction.

Having developed a formal model of layered architectures, the model can now be used for a rigorous analysis of the style. Thus, future work arises in two main areas:
\begin{inparaenum}[(i)]
	\item First of all, different variants of the style should be identified and defined through constraints over our model. For example, a ``basic'' variant of the style would impose a \emph{well-foundedness} constraint on the attachment relation and a ``strict'' variant would further constrain the attachment relation to be \emph{antitransitive}.
	\item Then, for each variant, a set of properties should be formulated and proved from the constraints. For example, in the ``basic'' variant, we may want to provide conditions that ensure that the configuration is usable. Moreover, the configuration semantics of lower level layers may be strictly independent of the behavior of upper level layers and under certain circumstances, the configuration semantics of upper level layers may also be independent of the behavior of lower level layers. In the ``strict'' version, changing a layers behavior may have even less impact on the configuration semantics of other layers within the architecture configuration.
\end{inparaenum}

Our work aims to contribute to a rigorous theory of architectural styles to provide a better understanding of architectural styles and the formal relationships between architectural design decisions and quality attributes. Thus, two further directions for future work arise:
\begin{inparaenum}[(i)]
	\item The approach used in this article should be applied to other architectural styles as well.
	\item Then, a general theory of architectural styles should be developed to investigate relationships between the different styles.
\end{inparaenum}

\section{Acknowledgments}
\vspace{-1ex}
%\paragraph{Acknowledgements}
This work was partially funded by the German Federal Ministry of Education and Research (BMBF), grants ``Software Campus project RE4SoS, 01IS12057'', and ``ARAMiS project, 01IS11035''.

We would like to thank Manfred Broy, Wolfgang Boehm, Maximilian Irlbeck, Maximilian Junker, Andreas Vogelsang, Vasileios Koutsoumpas, Veronika Bauer, and Daniel M\'endez Fern\'andez for their comments and helpful suggestions.

%\newpage
%\bibliographystyle{plain}
%\bibliographystyle{abbrv}
\bibliographystyle{eptcs}
\bibliography{Research}

%\newpage
%\appendix

%\section{Appendix: Conventions}
%\begin{itemize}
%\item Given a tuple $t$, we shall usually provide a name $n$ for each
%component and we write $t^{n}$ to denote component $n$ of tuple
%$t$.
%\item For sets $A$ and $B$, we denote the set of all functions from $A$ to $B$ by $A\to B$ and the set of all partial functions from $A$ to $B$ by $A\dashrightarrow B$.
%\item With $\domain f \subseteq A$ we denote the domain of a partial function $f:A \dashrightarrow B$.
%\item With $[p_{0},p_{1},\dots,p_{n}\mapsto S_{0},S_{1},\dots,S_{n}]$ we
%denote a valuation of ports $p_{0},p_{1},\dots,p_{n}$ with services
%$S_{0},S_{1},\dots,S_{n}$, respectively. Formally, $[p_{0},p_{1},\dots,p_{n}\mapsto S_{0},S_{1},\dots,S_{n}]=\lambda p\in\{p_{i}\mid i\in\mathbb{N}\land i\le n\}:\,\begin{cases}
%S_{0} & \textrm{if }p=p_{0}\\
%S_{1} & \textrm{if }p=p_{1}\\
%\vdots\\
%S_{n} & \textrm{if }p=p_{n}
%\end{cases}$
%
%\item For a set $S$ and a relation $R\subseteq S\times S$, with $\mathit{wf}(S,R)$,
%we requires $R$ to be well-founded on $S$.
%
%\item For a function $f:\, A\rightarrow B$ and a set $A'\subseteq A$,
%we write $f\restriction A':\, A'\rightarrow B$, to denote the restriction
%of function $f$ to domain $A'$.
%
%\item Sets which are assumed to exist and represent parameters of our theory are denoted by uppercase $\mathtt{MONOSPACE}$.
%\item Sets containing all elements of something are denoted calligraphic $\mathcal{S}$.
%\end{itemize}

\end{document}